\theoremstyle{definition}
\newtheorem{theorem}{Theorem}[section]
\newtheorem{main}{Main theorem}
\newtheorem{lemma}[theorem]{Lemma}
\newtheorem{proposition}[theorem]{Proposition}
\newtheorem{corollary}[theorem]{Corollary}
\newtheorem{remark}[theorem]{Remark}
\newtheorem{definition}[theorem]{Definition}
\title{Remarks on low-energy approximations 
for Feynman path integration on the sphere}
\author{Yoshihisa Miyanishi}
\date{\today}
\begin{document}
\maketitle
\begin{abstract} An alternative method of the Feynman path integral 
for $S^2$ is proposed. Our method employs action integrals 
$S(t,x,y)$ along shortest paths. 
We shall define the operator 
$$U_{\chi}(t)f(x)=(2\pi i)^{-1}\int_{S^2}\chi(x, y) a(x, y, t) 
\exp{\{i S(t,x,y)+\frac{i R t}{12}\}} f(y)\;dy,$$ where  $a(x, y, t)$ 
is the square root of Van Vleck determinant, 
$\frac{R}{12}$ is Dewitt curvature
and $\chi(x, y)$ is the bump function with a small support. 
By using spectral projectors $\rho(E)$ of $-\triangle$, 
we have the uniform and strong convergences:   
\begin{align*} 
&\hspace{4mm} \lim\limits_{N \rightarrow \infty}\Vert [\{U_{\chi}(t/N)\}^N-e^{it\triangle/2}] 
\rho({N^{1/3-\varepsilon}}) \Vert_{L^2}=0 
\hspace{14.5mm} \mbox{(convergence of operator norm)}, \\ 
&{s-}\lim\limits_{N \rightarrow \infty}\{U_{\chi}(t/N)\}^N \rho(N) f(x)
=e^{it\triangle/2} f(x) \quad \mbox{in}\ L^2(S^2) 
\hspace{7mm} \mbox{(strong convergence)}. 
\end{align*}
This is a rigorous construction of Schr\H{o}dinger evolution operators.  
\end{abstract}
\renewcommand{\thefootnote}{}
\footnote{2010 {\it Mathematics Subject Classification.} Primary 81Q30; Secondary 81Q35.} 
\footnote{{\it Key words and phrases.} Feynman path integral, Low energy approximation}
\section{Introduction and results.}
\par\hspace{5mm} 
Consider the standard sphere $(S^2,\ g_{st})$. So the geodesic distance 
$d(x,y)=\arccos (x \cdot y)$ 
for $x, y\in S^2\subset \mathbf{R}^3$, where $x \cdot y$ denotes an 
inner product. 
The scalar curvature $R$ is $2$ and the Gaussian curvature  is $1$. 
The action $S(t,x,y)$ is represented as an integral over time, taken along the 
geodesic path on the sphere  
between the initial time and the final time of the development of the system: 
$$ S(t,\ x,\ y)=\int_0^t \frac{1}{2}g_{x(t)}(\dot{x}(t), \dot{x}(t))\; dt 
=\frac{|d(x, y)|^2}{2t}, $$ 
and Van Vleck determinant $V(t,x,y)$ is defined by 
$$
V(t,x,y)=g^{-1/2}(x)g^{-1/2}(y) \det _{ij}
\left(\frac{\partial^2 S(t,\ x,\ y)}
{{\partial {x_i}}{\partial {y_j}}}\right).
$$
\par A reasonable candidate for approximating operators 
of $e^{\frac{it\triangle}{2}}$ associated to $S$ and $V$ 
is represented as the oscillatory integral operators. 
\par
\begin{definition}[Shortest path approximations on $S^2$]
The shortest path approximation 
$U_{\chi}(t)$ on $S^2$ is defined by 
\begin{align*}
U_{\chi} (t)f(x)\equiv \frac{1}{2\pi i}\int_{S^2} \chi(d(x,y)) 
\sqrt{V(t, x, y)} e^{iS(t,\ x,\ y)+\frac{iRt}{12}}f(y)\; dy,  
\end{align*}
where $\chi(d(x,y))$ is the bump function with compact support contained 
in $d(x, y)<\pi$ and     
 $\frac{R}{12}$ is Dewitt curvature constant.
\end{definition}
Under these conditions, the main result of this paper (\S 4) is the following: 
\begin{main}[Time slicing products and the strong limits]
For $f(x) \in  L^2(S^2)$, we have 
$$
s-\lim_{N\rightarrow \infty} \{U_{\chi}(t/N)\}^N \rho(N) f(x) 
=e^{\frac{it\triangle}{2}} f(x) \quad\mbox{in}\ L^2. 
$$ 
where $\rho(N)$ are spectral projectors defined by the spectral theorem : 
$-\triangle=\int_{0}^{\infty}E\ d\rho(E)$. 
\end{main} 
If $f(x)$ is a low energy function (i.e. a finite sum of eigenfunctions of $-\triangle$), 
the covergence of time slicing products is given without spectral projectors : 
\begin{corollary}
Let $Y_{l, m}$ be spherical harmonics and $f(x)=\sum\limits_{{\rm finite}} a_{l, m} Y_{l, m}$ be a 
finite sum of spherical harmonics. Then 
$$
s-\lim_{N\rightarrow \infty} \{U_{\chi}(t/N)\}^N f(x) 
=e^{\frac{it\triangle}{2}} f(x) \quad\mbox{in}\ L^2. 
$$ 
\end{corollary}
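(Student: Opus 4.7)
The plan is to reduce the corollary directly to the Main Theorem by exploiting the spectral compactness of the test function. Since $f=\sum_{\text{finite}} a_{l,m}Y_{l,m}$ is a finite linear combination of spherical harmonics and each $Y_{l,m}$ satisfies $-\triangle Y_{l,m}=l(l+1)Y_{l,m}$, there is a largest degree $L$ appearing in the sum. Consequently, the spectral support of $f$ with respect to $-\triangle$ is contained in $[0,L(L+1)]$, a compact interval independent of $N$.

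First I would fix such an $L$ and then observe that for every $N>L(L+1)$ the spectral projector $\rho(N)$ acts as the identity on $f$, since by the spectral theorem $\rho(N)$ is the orthogonal projection onto the closed subspace spanned by eigenfunctions of $-\triangle$ with eigenvalue at most $N$. In particular, for all such $N$,
$$\{U_{\chi}(t/N)\}^N f(x)=\{U_{\chi}(t/N)\}^N \rho(N) f(x).$$

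Next I would invoke the Main Theorem, which gives
$$s-\lim_{N\to\infty}\{U_{\chi}(t/N)\}^N \rho(N) f(x)=e^{\frac{it\triangle}{2}} f(x)\quad\mbox{in }L^2(S^2).$$
Since a sequence that eventually agrees with a convergent sequence shares the same strong limit, the identity displayed above, valid for all $N>L(L+1)$, immediately yields the claimed convergence $\{U_{\chi}(t/N)\}^N f\to e^{it\triangle/2}f$ in $L^2(S^2)$.

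No genuine obstacle arises beyond the Main Theorem itself: the corollary is a routine consequence of the fact that $f$ is spectrally compactly supported. If one wished to argue without invoking the Main Theorem, the real difficulty would be controlling the high-frequency tail $\{U_{\chi}(t/N)\}^N(I-\rho(N))f$; in the present setting that tail vanishes identically once $N$ exceeds $L(L+1)$, so the argument reduces to a one-line spectral observation.
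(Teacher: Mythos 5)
Your argument is correct and is essentially the (implicit) argument of the paper: since $\rho(E)f=\sum_{l(l+1)<E}a_{l,m}Y_{l,m}$, a finite sum of spherical harmonics satisfies $\rho(N)f=f$ once $N>L(L+1)$, and the Main Theorem then gives the claim. (Equivalently, Proposition 3.1 applied with the fixed level $E=L(L+1)+1$ yields the same conclusion with an explicit $O(1/N)$ rate, without needing the uniform boundedness machinery of \S 4.)
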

This is an analogous result to the Feynman integral proposed by means 
of finite dimensional approximations and Trotter type time 
slicing products (See e.g. \cite{Fu 1},\cite{Fu 2}, \cite{Fu 3}, 
\cite{Fu-Tu}, \cite{I-W}, \cite{Ino}, \cite{Int}, \cite{Ki-KH}, 
\cite{KN}, \cite{Ya 1}). In these papers, the stationary action 
trajectories are finite for fixed time $t>0$ and $x, y\in \mathbf{R}^n$,  
and the kernel $E(t,x,y)$ of $e^{it( \frac{-\triangle}{2}+V(x))}$ 
are bounded smooth for small $t\not=0$. Thus time slicing products 
converge without spectral projectors.  
\par
On compact manifolds, however, infinite many 
action paths exist, even if time $t>0$ is fixed. 
To clarify the meaning of spectral projectors,
we consider the quantum evolution on $S^2$. 
By Stone's theorem {(See e.g. \cite{RS})}, $e^{\frac{it\triangle}{2}}$ 
are unitary operators and the kernel of $e^{\frac{it\triangle}{2}}$ 
is given by  
$$
E(t,x,y)=\sum_{E_j}e^{\frac{-it{E_j}}{2}}\overline{u_j(x)}u_j(y)
$$
where $\{u_j(x)\}$ is eigenfunction expansion of $-\triangle$ 
and $E_j$ are eigenvalues.
The behavior of $E(t,x,y)$ is quite singular 
(See e.g. \cite{Ka}, \cite{Ni}, \cite{Ta 1}, \cite{Ta 2}, \cite{Ya 2}). 
Neverthless, when we sum a finite number of terms in $E$, $E_{finite}(t,x,y)$ are smooth 
and  we may intuitively choose classical shortest paths 
for low energy $E$. Accordingly we may define the heuristic approximation for 
Feynman path integration by $\{U_{\chi}(t/N)\}^N 
\rho({N^{1/3-\varepsilon}})$. 
Indeed, uniform convergences are proven in \S 3 :
\setcounter{section}{3}
\setcounter{theorem}{0}
\begin{proposition}
For small $\varepsilon>0$, 
$$
\lim_{N\rightarrow \infty} \Vert [\{U_{\chi}(t/N)\}^N -e^{\frac{it\triangle}{2}}] \rho(N^{1/3-\varepsilon}) \Vert_{L^2}=0.  
$$ 
\end{proposition}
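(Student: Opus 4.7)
I would adopt a Chernoff/Duhamel-type approach: telescope the discrepancy $U_\chi(t/N)^N - e^{it\triangle/2}$ into $N$ single-step errors, establish $L^2$-stability of $U_\chi(h)$ so these errors do not compound, and then quantify a short-time consistency estimate that is sharp on the low-energy subspace $\rho(E)L^2$. Balancing the resulting polynomial loss in $E$ against the scaling $E = N^{1/3-\varepsilon}$ yields the stated convergence.

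\textbf{Telescoping and stability.} Writing $h=t/N$, and using that $\rho(E)$ commutes with $e^{ih\triangle/2}$ but not with $U_\chi(h)$, I would start from
\begin{equation*}
[U_\chi(h)^N - e^{it\triangle/2}]\rho(E) = \sum_{k=0}^{N-1} U_\chi(h)^{N-1-k}\bigl(U_\chi(h) - e^{ih\triangle/2}\bigr)\rho(E)\, e^{ikh\triangle/2}.
\end{equation*}
Since $e^{ikh\triangle/2}$ is unitary, the $L^2$-operator norm of the left side is dominated by $N\cdot \sup_{k\le N}\|U_\chi(h)^k\|\cdot \|(U_\chi(h) - e^{ih\triangle/2})\rho(E)\|$. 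For the first factor I would prove a stability bound $\|U_\chi(h)\|_{L^2\to L^2} = 1 + O(h^2)$ uniformly in small $h$, by a Cotlar--Stein or direct $TT^*$/Schur-test argument on the oscillatory kernel of $U_\chi(h)$: the principal part is isometric by stationary phase applied to the Van Vleck amplitude, and the $O(h^2)$ deviation suffices to give $\sup_{k\le N}\|U_\chi(h)^k\|\le e^{Ct^2/N}\le C$ uniformly in $N$.

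\textbf{Consistency.} The heart of the proof is a short-time asymptotic expansion of $U_\chi(h)\phi$ via stationary phase at $y=x$. Expanding the phase $S(h,x,y)=d(x,y)^2/(2h)$, the Van Vleck amplitude, and the cutoff $\chi$ about the diagonal, one obtains for smooth $\phi$
\begin{equation*}
U_\chi(h)\phi(x) = \phi(x) + \tfrac{ih}{2}\triangle\phi(x) + h^2 B_2\phi(x) + h^3 R_h\phi(x).
\end{equation*}
The Dewitt phase $iRh/12$ is tuned precisely so that the coefficient of $h$ matches that of $e^{ih\triangle/2}$: without it, a non-vanishing scalar-curvature term of size $O(h)$ would survive and prevent convergence. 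Subtracting the analogous expansion of $e^{ih\triangle/2}$ gives
\begin{equation*}
(U_\chi(h) - e^{ih\triangle/2})\phi = h^2 (B_2 - A_2)\phi + O\bigl(h^3 \|\phi\|_{H^6}\bigr),
\end{equation*}
where $B_2-A_2$ is a differential operator of finite order. Using the elliptic bound $\|\triangle^m\rho(E)\|_{L^2\to L^2}\le E^m$ then produces an estimate of the form
\begin{equation*}
\|(U_\chi(h) - e^{ih\triangle/2})\rho(E)\|_{L^2\to L^2} \le C h^2 (1+E)^{\alpha}
\end{equation*}
for an explicit $\alpha$ (which I expect to be at most $3$) dictated by the power counting of derivatives retained in the expansion.

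\textbf{Combining and main obstacle.} Inserting this into the telescoping bound yields
\begin{equation*}
\|[U_\chi(h)^N - e^{it\triangle/2}]\rho(E)\|_{L^2} \le \tilde C\, t^2\,\frac{(1+E)^\alpha}{N},
\end{equation*}
which tends to zero whenever $(1+E)^\alpha = o(N)$. For $\alpha\le 3$, the choice $E = N^{1/3-\varepsilon}$ gives a quantitative rate of order $N^{-c\varepsilon}$, proving the proposition. The main obstacle is the consistency estimate: obtaining the sharp polynomial loss in $E$ requires tracking the stationary-phase expansion on $S^2$ to sufficient order, verifying explicitly that the Dewitt constant cancels the leading scalar-curvature correction, and controlling the boundary contribution of the cutoff $\chi(d(x,y))$ by non-stationary-phase integration by parts away from the diagonal. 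The stability bound is more routine but still must be produced in the precise form $1+O(h^2)$, as a crude $O(1)$ bound would be insufficient.
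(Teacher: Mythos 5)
Your overall strategy (one-step consistency of order $h^2$ with a polynomial loss $(1+E)^{3}$ in the spectral parameter, combined with an iteration over $N$ steps and the balance $E=N^{1/3-\varepsilon}$) is the same as the paper's, and your consistency analysis is sound in outline: the paper's Lemma 2.2 and Proposition 2.5 carry out exactly the stationary-phase expansion you describe, the Dewitt constant $\tfrac{R}{12}=\tfrac16$ cancels the value $\tfrac18+\tfrac{d^2-\sin^2 d}{8d^2\sin^2 d}\big|_{d=0}=\tfrac16$, the cutoff contribution is handled as a separate error term $E_{\chi_2}$, and the resulting bound is $\Vert(U_\chi(h)-e^{ih\triangle/2})f\Vert\leqq Ch^2\Vert(-\triangle+1)^3f\Vert$, i.e.\ your $\alpha=3$. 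The final estimate $Ct^2(1+E)^3/N$ is also the paper's.

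The genuine gap is in your stability step. First, your parenthetical claim that $\rho(E)$ does not commute with $U_\chi(h)$ is false: the kernel of $U_\chi(h)$ depends only on $d(x,y)$, so $U_\chi(h)$ commutes with rotations, acts as a scalar on each spherical-harmonic eigenspace, and hence commutes with $-\triangle$ and with $\rho(E)$ (the paper uses this explicitly in \S 4). Second, and more seriously, the bound $\Vert U_\chi(h)\Vert_{L^2\to L^2}=1+O(h^2)$ on \emph{all} of $L^2(S^2)$ that your telescoping requires is not routine and is nowhere established in the paper; the oscillatory-integral/$TT^*$ machinery (the paper's Lemmas 4.1--4.3) only yields $\Vert U_\chi(h)\Vert\leqq C$ with $C$ not close to $1$, and even the refined stability bound in the Main Theorem is $\Vert U_\chi(h)\rho(E)\Vert\leqq 1+C_1|h|+C_2h^2(E+1)$, i.e.\ only on the range of a projector and only $1+O(h)$. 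Asserting that ``the principal part is isometric by stationary phase'' begs the question on high frequencies, which is precisely where the paper's Theorem 4.4 shows the approximation degenerates. The fix is cheap once you use the commutation you denied: write the telescoping sum with every factor sandwiched by $\rho(E)$, so that stability on $\rho(E)L^2$ follows from your own consistency estimate, $\Vert U_\chi(h)\rho(E)\Vert\leqq 1+Ch^2(1+E)^3$, whose $N$-th power stays bounded exactly when $(1+E)^3t^2/N$ is bounded --- which is the same constraint that drives the final limit. The paper avoids the issue differently: it uses the exact Duhamel identity $U_\chi(h)=e^{ih\triangle/2}(1+\tilde E(h))$ with $\tilde E(h)=O(h^2(-\triangle+1)^3)$ commuting with $\triangle$, and expands the $N$-th power binomially, so that every term carries enough powers of $h^2$ and no separate operator-norm stability bound is ever needed for this proposition.
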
 
\par 
Another way to understand the low energy is WKB method  
in which well-known $h$-small semiclassical calculus gives  
the low energy good parametrices of $e^{it\triangle}$ 
(See for instance \cite[p.581]{BGT}, \cite{Ro}, \cite{Z}). 
So the low energy approximation is just a rewording WKB method 
in less $h$-small terminology. 
\par
For high energy functions, the strong convergence is assured by 
$L^2$ estimates in \S 4. Furthermore we shall see that this convergence 
is not uniform. 
\setcounter{section}{1}
\setcounter{theorem}{2}
\begin{remark}
In physics literature, Feynman is saying in his book \cite{FH} that each trajectory 
contributes to the total amplitude to go from $a$ to $b$ and 
that they contribute equal to the amplitude, but contribute 
at different phases. So the solution of 
Schr\"odinger equation should be given by ``sum over all possible trajectories of the system":  
$$
\phi (t, x)=\int_{\Gamma} e^{\frac{i}{\hbar}S_{t}(\gamma)} \phi (0, \gamma(0))\; {\mathcal D}\gamma
$$
In the formula above ${\mathcal D}\gamma$ denotes a Lebesgue-like measure on the path space $\Gamma$, 
$\hbar$ means the Planch constant and $S_{t}(\gamma)$ is the classical 
action functional of the system evaluated along the path $\gamma$. 
This is often superposition principle. Despite of the successfully predicting power of the concept 
of Feynman Path integral, the rigorous mathematical meaning of sum over all possible trajectries 
on the sphere is difficult to handle. 
For this reason, we suggest an alternative method of these caluculations on the sphere.
\end{remark}
\setcounter{section}{1}
\setcounter{theorem}{0}
\section{Notations and some prelimaries.}
\par
\hspace{5mm} In \S 2 , we shall give a short review of 
the notations and some prelimaries of Feynman path integral on $S^2$.  
\par 
We start out by rapidly going over a basic result on Van Vleck determinant. 
\par
Let $M$ be a Riemannian manifold and $x, y\in M$. 
If the geodesic $\stackrel{\textstyle\frown}{{xy}}$ is not null,  
the distance invariant, $V(t,x,y)$, is defined by 
$$
V(t,x,y)=g^{-1/2}(x)g^{-1/2}(y) \det _{ij}
\left(\frac{\partial^2 S}{{\partial {x_i}}{\partial {y_j}}}\right).
$$
\par 
$V(t, x, y)$ is called Van Vleck determinant 
(See e.g. \cite[p.220]{Ino-Ma}, \cite[p.17]{Wa}). 
On $S^2$, the following lemma is well-known (See e.g. \cite[p.3]{GSS}). 
We shall give an elementary proof in appendix. 
\begin{lemma}[Van Vleck determinant on $S^2$] Let $x, y\in S^2$ and $t\not=0$. 
We find 
$$
V(t, x, y)=\frac{d(x,y)}{t^2 \sin d(x,y)}\quad \text{for}\ 0\leqq d <\pi .
$$
\end{lemma}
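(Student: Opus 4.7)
By the rotational symmetry of $(S^2,g_{st})$ and the invariance of the Van Vleck construction under isometries, $V(t,x,y)$ depends only on $t$ and $\rho:=d(x,y)$, so it suffices to compute it for one convenient pair. I would place $x_0=(1,0,0)$ and $y_0=(\cos\rho,\sin\rho,0)$ in $\mathbf{R}^3$ and introduce normal coordinates aligned with the connecting geodesic: at $x_0$ the basis $\sigma_1=e_2$, $\sigma_2=e_3$ of $T_{x_0}S^2$ (with $\sigma_1$ pointing toward $y_0$), and at $y_0$ the basis $\tau_1=-\sin\rho\,e_1+\cos\rho\,e_2$ (the unit tangent of the geodesic at $y_0$), $\tau_2=e_3$. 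Parametrizing $x(a)=\cos|a|\,x_0+(\sin|a|/|a|)(a_1\sigma_1+a_2\sigma_2)$ and similarly $y(b)$ via the exponential map, one gets $g^{-1/2}(x_0)\,g^{-1/2}(y_0)=1$ because the metric is the identity at the base point in normal coordinates.

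Next, writing $c:=x(a)\cdot y(b)$ so that $S=\arccos^2(c)/(2t)$, the chain rule gives
$$\frac{\partial^2 S}{\partial a_i\partial b_j}=\frac{d^2S}{dc^2}\,\frac{\partial c}{\partial a_i}\frac{\partial c}{\partial b_j}+\frac{dS}{dc}\,\frac{\partial^2 c}{\partial a_i\partial b_j},$$
with $dS/dc=-\rho/(t\sin\rho)$ and $d^2S/dc^2=(\sin\rho-\rho\cos\rho)/(t\sin^3\rho)$ evaluated at $c=\cos\rho$. Since $\partial^2 c/\partial a_i\partial b_j=(\partial x/\partial a_i)\cdot(\partial y/\partial b_j)$ (no pure second derivatives of $x$ or $y$ enter the mixed term), evaluating at $a=b=0$ with $\partial x/\partial a_i|_0=\sigma_i$ and $\partial y/\partial b_j|_0=\tau_j$ yields $(\partial c/\partial a_i)|_0=(\sin\rho,0)$, $(\partial c/\partial b_j)|_0=(-\sin\rho,0)$, and the matrix $(\partial^2 c/\partial a_i\partial b_j)|_0=\mathrm{diag}(\cos\rho,\,1)$.

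Assembling these pieces, the off-diagonal $(i,j)$ entries vanish outright, and a small cancellation in the $(1,1)$ entry (the two occurrences of $\rho\cos\rho/(t\sin\rho)$ with opposite sign) leaves the mixed Hessian as $\mathrm{diag}(-1/t,\,-\rho/(t\sin\rho))$; its determinant is $\rho/(t^2\sin\rho)=d(x,y)/(t^2\sin d(x,y))$, which proves the lemma for $0<\rho<\pi$, with the $\rho=0$ case handled by continuity.

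The main obstacle is just careful bookkeeping: one must handle the $1/|a|$ in the exponential parametrization at the origin (smoothness to the needed order is secured by $(\sin|a|/|a|)\,a_i=a_i+O(|a|^3)$), and one must verify the cancellation in the $(1,1)$ entry that reduces the sum of two $\rho\cos\rho$-type terms to the clean value $-1/t$. The off-diagonal vanishing is not incidental — it reflects our choice to align one coordinate axis at each endpoint with the geodesic, so tangential and normal variations decouple; any other choice of bases would give the same determinant but a messier intermediate matrix.
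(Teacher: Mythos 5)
Your proof is correct, and it takes a genuinely different and cleaner route than the paper. The paper works in a single spherical coordinate system $(\theta_i,\phi_i)$ for both points, writes $S=\frac{1}{2t}\arccos^2\{\sin\theta_1\sin\theta_2\cos(\phi_1-\phi_2)+\cos\theta_1\cos\theta_2\}$, and evaluates the resulting $2\times2$ mixed Hessian determinant by brute force (the intermediate expression is checked with Mathematica) before dividing by $g^{1/2}(x)g^{1/2}(y)=\sin\theta_1\sin\theta_2$. You instead exploit the coordinate invariance of $g^{-1/2}(x)g^{-1/2}(y)\det(\partial^2 S/\partial x_i\partial y_j)$ to choose geodesic-adapted normal coordinates at each endpoint, which kills the metric factors and diagonalizes the mixed Hessian: I checked your entries, and indeed $\partial^2S/\partial a_1\partial b_1=-\sin^2\rho\cdot\frac{\sin\rho-\rho\cos\rho}{t\sin^3\rho}-\frac{\rho\cos\rho}{t\sin\rho}=-\frac{1}{t}$ and $\partial^2S/\partial a_2\partial b_2=-\frac{\rho}{t\sin\rho}$, giving $\det=\rho/(t^2\sin\rho)$ as claimed. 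What your approach buys is a hand-verifiable computation with a transparent geometric meaning (the $-1/t$ entry is the tangential direction, where the sphere looks one-dimensionally flat along the geodesic, and the $-\rho/(t\sin\rho)$ entry is the normal direction, reflecting the Jacobi field $\sin\rho$), and it generalizes immediately to $S^n$ or other rank-one spaces where the adapted frame again diagonalizes the Hessian; what the paper's computation buys is that it never has to invoke the (standard, but unstated) fact that $V$ is independent of the choice of coordinates at $x$ and $y$, since it works in one fixed chart throughout. If you write this up, do state that invariance explicitly — it is the one step you are using silently — and note, as you do, that the $1/|a|$ in the exponential parametrization is harmless because $\frac{\sin|a|}{|a|}a_i=a_i+O(|a|^3)$ is smooth.
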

We often abbreviate $d(x,y)$ to $d$ in the following sentences. 
\par
The smooth kernel of $U_{\chi}(t)$ can now be constructed as follows.  
\begin{lemma}
Let $\widehat K(t, x, y)=\sqrt{V(t, x, y)} e^{iS(t,\ x,\ y)}
=\frac{1}{t}\sqrt{\frac{d(x, y)}{\sin d(x,y)}}
\exp\{\frac{i|d(x,y)|^2}{2t}\}$. Then we have 
$$
\left(i\frac{\partial}{\partial t}+\frac{1}{2}\triangle_x \right) 
\widehat K(t,x,y) 
=\left(i\frac{\partial}{\partial t}+\frac{1}{2}\triangle_y \right) 
\widehat K(t,x,y)=\frac{1}{t}\sqrt{\frac{d}{\sin d}}
\left(\frac{1}{8}+\frac{d^2-\sin^2 d}{8d^2 \sin^2 d}\right)
\exp\left(\frac{id^2}{2t}\right) 
\quad \text{for}\ 0\leqq d <\pi . 
$$
\end{lemma}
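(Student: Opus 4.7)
The function $\widehat K(t,x,y)$ depends on $x$ only through the scalar $d(x,y)$, so for computing $\Delta_x \widehat K$ I would work in geodesic polar coordinates on $S^2$ centered at $y$. Since the standard metric in these coordinates is $dr^2 + \sin^2 r\, d\theta^2$ (valid on the cut-free domain $d < \pi$, which is precisely where $\chi$ is supported), a function depending only on $r = d$ has Laplace--Beltrami operator reducing to the radial part $\partial_d^2 + \cot d \cdot \partial_d$. By the $x \leftrightarrow y$ symmetry of $\widehat K$, the identical computation with the roles reversed handles $\Delta_y$, so only one case needs to be carried out.

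Write $\widehat K = t^{-1} g(d)\, e^{iS}$ with $g(d) = \sqrt{d/\sin d}$ and $S(t,d) = d^2/(2t)$. I would compute $i\partial_t \widehat K$ and $\tfrac{1}{2}\Delta_x \widehat K$ term by term, pull out the common factor $e^{iS}/t$, and then organize the sum by powers of $1/t$. Three orders appear. The $1/t^2$ terms come from squaring $\partial_d S = d/t$ inside $\Delta_x$ and from $\partial_t S$ inside $i\partial_t$, and they cancel because $\tfrac{1}{2}(\partial_d S)^2 + \partial_t S = 0$, which is exactly the Hamilton--Jacobi equation for the classical action. The $1/t$ cross-terms combine into $\frac{i}{2t^2}\bigl[\,2d g'(d) + d g(d)\cot d - g(d)\,\bigr]\, e^{iS}$; this bracket vanishes because $g'/g = \tfrac{1}{2}(1/d - \cot d)$, which is immediate from $\log g = \tfrac{1}{2}(\log d - \log \sin d)$. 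This is nothing but the Van Vleck transport equation, encoding the fact that $t^{-2}g(d)^2 = V(t,x,y)$.

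What remains is a single $t^0$ contribution $\tfrac{1}{2t}\bigl[g''(d) + \cot d \cdot g'(d)\bigr]\, e^{iS}$. Setting $h = \log g$, this equals $\tfrac{g}{2t}\bigl[h'' + (h')^2 + \cot d \cdot h'\bigr]\, e^{iS}$. Plugging in $h' = \tfrac{1}{2}(1/d - \cot d)$ and $h'' = \tfrac{1}{2}(\csc^2 d - 1/d^2)$, expanding, and using $\cot^2 d = \csc^2 d - 1$ to collapse like terms, the bracket reduces to $\tfrac{1}{4} + (d^2 - \sin^2 d)/(4 d^2 \sin^2 d)$. Multiplied by $g/(2t)$ this matches the claimed right-hand side exactly.

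The main obstacle is just careful bookkeeping of the several partial derivatives and factors of $i$; once the $1/t^2$ cancellation (Hamilton--Jacobi) and the $1/t$ cancellation (transport equation for the square root of the Van Vleck determinant) are isolated, the remaining identity is a one-line trigonometric simplification, and the $\Delta_y$ statement follows by the same argument with $x$ and $y$ interchanged.
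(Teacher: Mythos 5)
Your proposal is correct and follows essentially the same route as the paper: both reduce $\Delta_x$ to the radial operator $\partial_d^2+\cot d\,\partial_d$ in geodesic polar coordinates centered at $y$ and carry out the same term-by-term computation, with your grouping by powers of $1/t$ (Hamilton--Jacobi cancellation at order $t^{-3}$, Van Vleck transport cancellation at order $t^{-2}$, and the residual curvature term at order $t^{-1}$) being a tidier bookkeeping of exactly the cancellations the paper's direct expansion exhibits. The final trigonometric simplification to $\frac{1}{8}+\frac{d^2-\sin^2 d}{8d^2\sin^2 d}$ checks out.
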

\begin{proof}
For fixed $y\in S^2$, we may assume 
$$
\begin{cases}
 &x=(\sin\theta \cos\phi,\ \sin\theta \sin\phi,\ \cos \theta)
   =(\sin d \cos\phi,\ \sin d \sin\phi,\ \cos d), \\ 
 &y=(0, 0, 1) 
\end{cases}\quad \cdots (y\mbox{-spherical coordinate}).
$$
In $y$-spherical coordinate on the sphere,  
$\triangle_x=\frac{1}{\sin d}\frac{\partial}{\partial d}\left(\sin d\frac{\partial}{\partial d}\right)
+\frac{1}{\sin^2 d}\frac{\partial^2}{\partial \phi^2}$. 
So
\begin{align*}
-\frac{1}{2}\triangle_x \widehat K(t,x,y) 
&= -\frac{1}{2} \left\{ \frac{1}{\sin d}\frac{\partial}{\partial d} \left(\sin d\frac{\partial}{\partial d}\right)
+\frac{1}{\sin^2 d}\frac{\partial^2}{\partial \phi^2}\right\} \left[\frac{1}{t}\sqrt{\frac{d}{\sin d}}
\exp\left(\frac{id^2}{2t}\right)\right] \\
&=-\frac{1}{2} \left\{ \frac{1}{\sin d}\frac{\partial}{\partial d} \left(\sin d\frac{\partial}{\partial d}\right)
\right\} \left[\frac{1}{t}\sqrt{\frac{d}{\sin d}}
\exp\left(\frac{id^2}{2t}\right)\right] \\
&=-\frac{1}{2\sin d}\frac{\partial}{\partial d}\left[ \left\{\frac{1}{2t}\sqrt{\frac{\sin d}{d}}
\left(\frac{\sin d-d\cos d}{\sin d}\right)+\frac{id}{t^2}\sqrt{d\sin d}\right\}\exp\left(\frac{id^2}{2t}\right)\right] \\
&=\left\{-\frac{i}{t^2}+\frac{d^2}{2t^3}-\frac{1}{8t}\left(1+\frac{d^2-\sin^2 d}{d^2 \sin^2 d}\right)\right\} 
\sqrt{\frac{d}{\sin d}} e^{\frac{id^2}{2t}}. 
\end{align*}
Summarizing the calculations, we have
$$
\begin{cases}
i\frac{\partial}{\partial t}\widehat K(t,x,y)=\left(-\frac{i}{t^2}
+\frac{d^2}{2t^3} \right) \sqrt{\frac{d}{\sin d}} e^{\frac{id^2}{2t}}, \\
-\frac{1}{2}\triangle \widehat K(t,x,y)
=\left\{-\frac{i}{t^2}+\frac{d^2}{2t^3}-\frac{1}{8t}\left(1+\frac{d^2-\sin^2 d}{d^2 \sin^2 d}\right)\right\} 
\sqrt{\frac{d}{\sin d}} e^{\frac{id^2}{2t}}. 
\end{cases}
$$
\end{proof}
We note that $\left(\frac{1}{8}+\frac{d^2-\sin^2 d}{8d^2 \sin^2 d}\right) 
\Big|_{d=0} = \frac{1}{6}$. 
Thus we need to modify $\widehat K(t, x, y)$ 
with Dewitt curvature $\frac{R}{12}=\frac{1}{6}$ : 
$$
K(t, x, y)\equiv \widehat K(t, x, y) \cdot e^{\frac{iRt}{12}}
=\frac{1}{t}\sqrt{\frac{d(x, y)}{\sin d(x,y)}}
\exp\left\{\frac{i|d(x,y)|^2}{2t}+\frac{iRt}{12} \right\}. 
$$ 
This modification is known in physics literature (See for instance \cite{De}, \cite[p.457]{Do}).
By lemma 2.2,  
\begin{align*}
\left(i\frac{\partial}{\partial t}+\frac{1}{2}\triangle_x \right)
K(t,x,y)&=\frac{1}{t}\sqrt{\frac{d}{\sin d}}
\left(\frac{1}{8}+\frac{d^2-\sin^2 d}{8d^2 \sin^2 d}-\frac{1}{6} \right)
\exp\left\{\frac{i|d(x,y)|^2}{2t}+\frac{iRt}{12} \right\} \\
&=\frac{1}{t}\sqrt{\frac{d}{\sin d}}
\left(\frac{d^2-\sin^2 d}{8d^2 \sin^2 d}-\frac{1}{24} \right)
\exp\left\{\frac{i|d(x,y)|^2}{2t}+\frac{iRt}{12} \right\}
\quad \text{for}\ 0\leqq d <\pi . 
\end{align*} 
For $\chi(d)K(t, x, y)$, we obtain  
\begin{align*}
&\left(i\frac{\partial}{\partial t}+\frac{1}{2}\triangle_x \right)
(\chi K(t,x,y)) \\
&=\chi \left(\frac{d^2-\sin^2 d}{8d^2 \sin^2 d}-\frac{1}{24} \right)K(t,x,y) 
+\frac{1}{2}(\triangle_x \chi)K(t,x,y)
+\frac{1}{2}\left(2\frac{\partial \chi}{\partial d} \frac{\partial}{\partial d} K(t,x,y)\right) \\
&=\Big[\chi \left(\frac{d^2-\sin^2 d}{8d^2 \sin^2 d}-\frac{1}{24} \right)+\frac{1}{2}(\triangle_x \chi) 
+\frac{\partial \chi}{\partial d} 
\left(\frac{\sin d-d\cos d}{2d \sin d}\right) \Big] K(t,x,y)  \\
&\quad + \frac{\partial \chi}{\partial d} 
\left(\frac{id}{t} \right)  K(t,x,y).
\end{align*}
Seeing this, we define the error integral $E_{\chi_1}(t)$ and $E_{\chi_2}(t)$ by 
$$
\begin{cases} 
&E_{\chi_1}(t) f(x) \equiv \frac{1}{2\pi i}\int_{S^2}
\big[\chi \left(\frac{d^2-\sin^2 d}{8d^2 \sin^2 d}-\frac{1}{24} \right)
+\frac{1}{2}(\triangle_x \chi) 
+(\frac{\partial \chi}{\partial d} ) 
\left(\frac{\sin d-d\cos d}{2d \sin d}\right) \big] K(t,x,y)f(y)\; dy, \\
&E_{\chi_2}(t) f(x) \equiv \frac{1}{2\pi i}\int_{S^2}(\frac{\partial \chi}{\partial d})
\left(\frac{id}{t} \right)  K(t,x,y)f(y)\; dy. \\
\end{cases}
$$
where $\big[\chi \left(\frac{d^2-\sin^2 d}{8d^2 \sin^2 d}-\frac{1}{24} \right)
+\frac{1}{2}(\triangle_x \chi) 
+(\frac{\partial \chi}{\partial d} ) 
\left(\frac{\sin d-d\cos d}{2d \sin d}\right) \big]|_{d=0}=0$. 
\par
In order to estimate $U_{\chi}(t)$, $E_{\chi_1}(t)$ and $E_{\chi_2}(t)$, 
we state the method of stationary phase where $S(t,x,y)$ is a quadratic form,  
which is convenient here (See \cite[Lemma 7.7.3]{Ho}). 
\begin{lemma}
Let $A$ be a symmetric non-degenerate matrix with ${\rm Im} A \geqq 0$. 
Then we have for every integer $k>0$ and integer $s>n/2$
\begin{align*}
\Big| \int_{\mathbf{R}^n} u(x) e^{\frac{i <Ax, x>}{2t}}\; dx
-(\det(& A/2\pi i t))^{-\frac{1}{2}}
\sum_{j=0}^{k-1}(-it/2)^j \langle A^{-1}D, D \rangle^j u(0)/j! \Big| 
\\
&\leqq C_k (\Vert A^{-1} \Vert t)^{n/2+k} 
\sum_{|\alpha|\leqq 2k+s} \Vert D^{\alpha}u \Vert_{L^2(\mathbf{R}^n)}, 
\quad \text{for}\ \ u(x)\in \mathcal{S}(\mathbf{R}^n) .
\end{align*}
\end{lemma}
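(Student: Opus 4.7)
The plan is to pass to the Fourier side, where the oscillatory Gaussian becomes a polynomial perturbation of a pure Gaussian that can be Taylor expanded cleanly. First, by Plancherel I would write
$$\int_{\mathbf{R}^n} u(x)\, e^{i\langle Ax,x\rangle/(2t)}\,dx = (2\pi)^{-n}\int_{\mathbf{R}^n}\hat u(\xi)\, \mathcal{F}\!\bigl(e^{i\langle A\cdot,\cdot\rangle/(2t)}\bigr)(-\xi)\,d\xi,$$
and invoke the standard oscillatory-Gaussian identity
$$\mathcal{F}\!\bigl(e^{i\langle Ax,x\rangle/(2t)}\bigr)(\xi) = \bigl(\det(A/(2\pi it))\bigr)^{-1/2}\, e^{-it\langle A^{-1}\xi,\xi\rangle/2},$$
which for $\mathrm{Im}\,A>0$ is a direct Gaussian computation and which extends to $\mathrm{Im}\,A\geqq 0$ non-degenerate by analytic continuation (replace $A$ by $A+i\varepsilon I$, carry out the argument, and let $\varepsilon\to 0^+$ along a contour that fixes a single branch of the square root of $\det$).

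Next I would Taylor-expand the exponential on the Fourier side:
$$e^{-it\langle A^{-1}\xi,\xi\rangle/2} = \sum_{j=0}^{k-1}\frac{(-it/2)^j \langle A^{-1}\xi,\xi\rangle^j}{j!} + R_k(\xi,t),$$
with $R_k$ in integral form. Pairing the polynomial part against $\hat u(-\xi)$ and using the identity $(2\pi)^{-n}\int \hat u(\xi) p(\xi)\,d\xi = p(D)u(0)$ for any polynomial $p$ reproduces precisely the claimed truncated series $\sum_{j=0}^{k-1}(-it/2)^j \langle A^{-1}D,D\rangle^j u(0)/j!$. Hence the quantity inside the absolute value on the left-hand side of the lemma equals exactly
$$(2\pi)^{-n}\bigl(\det(A/(2\pi it))\bigr)^{-1/2}\int \hat u(-\xi)\, R_k(\xi,t)\,d\xi,$$
and everything reduces to estimating this one remainder integral.

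For the remainder, the integral form of Taylor's theorem yields the pointwise bound $|R_k(\xi,t)|\leqq C_k (t\|A^{-1}\|)^k|\xi|^{2k}$, while the prefactor satisfies $|(\det(A/(2\pi it)))^{-1/2}|\leqq C(\|A^{-1}\|t)^{n/2}$; multiplying the two gives the $(\|A^{-1}\|t)^{n/2+k}$ factor in the claim. To convert the remaining $\int |\hat u(\xi)||\xi|^{2k}\,d\xi$ into $L^2$-norms of derivatives, I would insert the weight $(1+|\xi|^2)^{s/2}(1+|\xi|^2)^{-s/2}$ and apply Cauchy--Schwarz: since $s>n/2$ one has $(1+|\xi|^2)^{-s/2}\in L^2(\mathbf{R}^n)$, and Plancherel converts the other factor into $\sum_{|\alpha|\leqq 2k+s}\|D^\alpha u\|_{L^2}$, matching the right-hand side. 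The step I expect to be the real obstacle is the uniform control of the branch of $(\det(A/(2\pi it)))^{-1/2}$ under the $\varepsilon\to 0^+$ deformation when $\mathrm{Im}\,A$ is only nonnegative: the estimate $|(\det)^{-1/2}|\leqq C\|A^{-1}\|^{n/2}$ must hold with a constant independent of $\varepsilon$ so that the limit can actually be taken, and keeping this branch-and-constant bookkeeping straight is the only genuinely non-mechanical part of the argument.
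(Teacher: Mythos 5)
Your proof is correct and coincides with the argument behind the result the paper merely cites (H\"ormander, Lemma 7.7.3) without reproving: Parseval plus the exact Fourier transform of the imaginary Gaussian, Taylor expansion of $e^{-it\langle A^{-1}\xi,\xi\rangle/2}$ with the remainder bound $|e^{w}-\sum_{j<k}w^{j}/j!|\leqq |w|^{k}/k!$ valid since $\mathrm{Re}\,w\leqq 0$ when $\mathrm{Im}\,A\geqq 0$, and Cauchy--Schwarz against $(1+|\xi|^{2})^{-s/2}\in L^{2}$ for $s>n/2$. The branch-and-constant issue you flag is handled exactly as you describe, the key point being $|\det(A/2\pi i t)|^{-1/2}\leqq (2\pi t)^{n/2}\Vert A^{-1}\Vert^{n/2}$ uniformly in the regularization $A+i\varepsilon I$.
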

The right hand side in the above lemma is just the Sobolev norm :   
$$\Vert \cdot \Vert_{H^{2k+s}(\Omega)}
=\sum_{|\alpha|\leqq 2k+s} \Vert D^{\alpha} \cdot \Vert_{L^2 (\Omega)}.$$   
Letting $A=I$ (unit matrix), we obtain stationary phase lemma 
in the polar coordinate system, 
\begin{corollary} Let $\chi(r)\in C_0^{\infty}(\mathbf{R}) $ be 
the bump function 
with compact support contained in $|r|<R$. 
\begin{align*}
\Big| \int_{0}^{2 \pi} \int_{\mathbf{R}} \chi(r) r\ u(r,\theta) 
e^{\frac{i r^2}{2t}}\; dr &d\theta
-( 2\pi i t)\sum_{j=0}^{k-1}(it\triangle_{flat}/2)^j  u(0)/j! \Big| 
\\
&\leqq \tilde C_k t^{k+1} \Vert \chi u \Vert_{H^{2k+2}(\Omega_R)}
\quad \text{for}\ \ u(x, y)\in {C}^{\infty}(\mathbf{R}^2, \mathbf{C}), 
\end{align*}
where $\Omega_R=\{(x, y)\in{\mathbf{R}^2}\ |\ r=\sqrt{x^2+y^2}<R \}$.  
\end{corollary}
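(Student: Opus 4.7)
The plan is to reduce this directly to Lemma 2.3 by rewriting the polar integral as a two dimensional Cartesian integral and applying the lemma with $A = I$ (the $2 \times 2$ identity). First I would define $\tilde u(x_1, x_2) = \chi(r) u(r, \theta)$ with $r=\sqrt{x_1^2+x_2^2}$ and $\theta = \arg(x_1 + ix_2)$. Assuming $\chi$ is chosen to be constant near $r=0$ (or, equivalently, even in $r$ there, which is the usual convention for radial bump functions), $\tilde u$ is smooth on $\mathbf{R}^2$ and compactly supported in $\Omega_R$, hence lies in $\mathcal{S}(\mathbf{R}^2)$. Passing to polar coordinates gives
$$
\int_{\mathbf{R}^2}\tilde u(x) e^{\frac{i|x|^2}{2t}}\, dx = \int_0^{2\pi}\int_0^{\infty}\chi(r)\, u(r,\theta)\, e^{\frac{ir^2}{2t}}\, r\, dr\, d\theta,
$$
which is the integral that appears on the left side of the Corollary (the extension of the $r$-integration to all of $\mathbf{R}$ is harmless since the integrand $\chi(r)\,r\,u$ is either understood to vanish for $r<0$ or $\chi$ is even).

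Next I would apply Lemma 2.3 with $n=2$, $A = I$, so $A^{-1}=I$. Then $\det(A/2\pi i t) = 1/(2\pi i t)^2$, so $(\det(A/2\pi i t))^{-1/2} = 2\pi i t$ (with the branch of the square root fixed by the method of stationary phase). Moreover $\langle A^{-1} D, D\rangle = D_1^2 + D_2^2 = -\triangle_{flat}$, and therefore
$$
(-it/2)^j \langle A^{-1} D, D\rangle^j u(0)/j! = (it\triangle_{flat}/2)^j u(0)/j!.
$$
Since $n=2$, the condition $s > n/2$ is satisfied by $s=2$, which yields the Sobolev index $2k+s = 2k+2$. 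Substituting these values into Lemma 2.3 gives exactly
$$
\Bigl| \int_{\mathbf{R}^2}\tilde u(x) e^{\frac{i|x|^2}{2t}}\, dx - (2\pi i t)\sum_{j=0}^{k-1} (it\triangle_{flat}/2)^j u(0)/j! \Bigr| \leqq C_k\, t^{k+1}\, \Vert \tilde u \Vert_{H^{2k+2}(\mathbf{R}^2)}.
$$
Since $\tilde u = \chi u$ is supported in $\Omega_R$, $\Vert \tilde u \Vert_{H^{2k+2}(\mathbf{R}^2)} = \Vert \chi u \Vert_{H^{2k+2}(\Omega_R)}$, so taking $\tilde C_k = C_k$ and combining with the polar identity above yields the stated estimate.

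The only point requiring even minor care is the smoothness of $\tilde u$ at the origin: $\chi(\sqrt{x_1^2 + x_2^2})$ must be smooth on $\mathbf{R}^2$, which is automatic provided $\chi$ is a genuine radial bump (constant near $0$), as is standard. Beyond this, the argument is purely a change of variables combined with the identification $A=I$ in Hörmander's lemma, so I do not anticipate a substantive obstacle.
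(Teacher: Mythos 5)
Your argument is correct and is precisely the route the paper takes: Corollary~2.4 is obtained from Lemma~2.3 by setting $n=2$, $A=I$, $s=2$ (so $2k+s=2k+2$ and the error factor is $t^{n/2+k}=t^{k+1}$) and passing to polar coordinates, which the paper states in a single line without further detail. Your added remarks on the smoothness of $\chi(\sqrt{x_1^2+x_2^2})$ at the origin and on reading the $r$-integral over $\mathbf{R}$ as the integral over $r\geqq 0$ address the only points the paper leaves implicit.
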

From corollary 2.4 we shall now estimate $U_{\chi}(t)$, $E_{\chi_1}(t)$ and $E_{\chi_2}(t)$. 
\begin{proposition} For $R<\pi$ and $x\in S^2$, we have
\begin{align*} 
&\Big| U_{\chi} (t) f(x)- f(x) \Big| \leqq C t \Vert (-\triangle+1)^2 f \Vert_{L^2(S^2)}, \\ 
&\Big| E_{\chi_1}(t) f(x) \Big|  \leqq C t \Vert (-\triangle+1)^2 f \Vert_{L^2(S^2)}  \\
& and \\
&\Big| E_{\chi_2}(t) f(x) \Big|  \leqq C t \Vert (-\triangle+1)^3 f \Vert_{L^2(S^2)}
\quad \text{for}\ \ f(x)\in C^{\infty}({S}^2). 
\end{align*}
\end{proposition}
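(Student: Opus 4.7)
The plan is to reduce all three pointwise bounds to direct applications of Corollary 2.4 after transferring each $S^2$-integral to $\mathbf{R}^2$ via the exponential map at $x$. First I pass to geodesic polar coordinates $(r,\theta)=(d(x,y),\theta)$ centred at $x$, so $dy=\sin r\,dr\,d\theta$. Substituting Lemma 2.1 and $S(t,x,y)=r^2/(2t)$, the combination $\sqrt{V(t,x,y)}\,\sin r$ collapses to $t^{-1}\sqrt{r\sin r}=t^{-1}r\sqrt{\sin r/r}$, which is exactly the factor $r$ required in Corollary 2.4. Since $\sqrt{\sin r/r}$ and $f\circ\exp_x$ are smooth on the polar disc of radius $<\pi$, each of the three kernels is recast in the standard corollary form
\[
\frac{1}{2\pi it}\int_0^{2\pi}\!\int_0^R \tilde\chi(r)\,r\,u(r,\theta)\,e^{ir^2/(2t)}\,dr\,d\theta,
\]
modulo the overall phase $e^{iRt/12}$, with a smooth $u(r,\theta)$ on $\mathbf{R}^2$ that can be read off from the prefactors $\chi(r)$, $A_1(r)$, and $(\partial\chi/\partial r)(ir/t)$, respectively, where $A_1$ denotes the bracket appearing in the definition of $E_{\chi_1}$.

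For the first inequality I apply Corollary 2.4 with $k=1$. The leading stationary-phase value is $u(0,\theta)=e^{iRt/12}f(x)$, so $|U_\chi(t)f(x)-f(x)|$ is controlled by $|e^{iRt/12}-1|\cdot|f(x)|$ plus the corollary remainder of order $t^2\|u\|_{H^4}$. The first piece is absorbed by Sobolev embedding $H^2(S^2)\hookrightarrow L^\infty$ together with $|e^{iRt/12}-1|\le Ct$; the second, after division by the prefactor $t$, becomes $Ct\|u\|_{H^4(\mathbf{R}^2)}\le Ct\|(-\triangle+1)^2 f\|_{L^2(S^2)}$ by elliptic regularity on $S^2$. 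For $E_{\chi_1}$ the very same argument with $k=1$ applies and is in fact simpler: now $u(0)=0$ because $A_1|_{d=0}=0$ (recorded immediately after the definition of $E_{\chi_1}$), so only the $O(t^2\|u\|_{H^4})$ remainder survives, and division by $t$ yields the desired $Ct\|(-\triangle+1)^2 f\|_{L^2}$.

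The genuinely new difficulty is $E_{\chi_2}$, whose prefactor $(\partial\chi/\partial d)(id/t)$ contributes an extra $1/t$ and places $1/t^2$ in front of the oscillatory integral. A single stationary-phase order is therefore insufficient. The remedy is that $\partial\chi/\partial d$ vanishes identically in a neighbourhood of $d=0$ (because $\chi\equiv 1$ there), so the associated $u$ vanishes identically near the origin. Applying Corollary 2.4 with $k=2$ kills both $u(0)$ and $\triangle_{\mathrm{flat}}u(0)$, leaving a remainder of order $t^3\|u\|_{H^6(\mathbf{R}^2)}$. After the $1/t^2$ prefactor and elliptic regularity, this produces $Ct\|(-\triangle+1)^3 f\|_{L^2(S^2)}$, as stated.

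The main obstacle I expect is the regularity verification required to realize $u$ as a genuinely smooth function on the flat disc. One must expand $(d^2-\sin^2 d)/(d^2\sin^2 d)$ as a power series in $d^2$, observe that its constant term is $1/3$, and confirm that the $-1/24$ appearing in $A_1$ exactly cancels the contribution of $\chi/8$ times this constant, leaving $A_1$ a smooth, even, $O(d^2)$-small function of $r$; one also notes that the cut-off-derivative terms in $A_1$ are supported away from $0$, hence smooth in Cartesian coordinates. Once this algebraic check is in place and one invokes the standard identifications $\|f\|_{H^{2k+2}(S^2)}\le C\|(-\triangle+1)^{k+1}f\|_{L^2}$, the three estimates follow from the short applications of Corollary 2.4 sketched above.
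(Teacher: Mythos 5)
Your proposal is correct and follows essentially the same route as the paper's own proof: geodesic polar coordinates at $x$ reducing each operator to the form of Corollary 2.4, the choice $k=1$ for $U_\chi$ and $E_{\chi_1}$ (using the vanishing of the $E_{\chi_1}$-amplitude at $d=0$) and $k=2$ for $E_{\chi_2}$ (using that $\partial\chi/\partial d$ vanishes near $d=0$ to absorb the extra $1/t$), Sobolev embedding for the $|1-e^{iRt/12}|\,|f(x)|$ term, and elliptic regularity (the paper's G\r{a}rding inequality plus a partition-of-unity comparison of flat and spherical Sobolev norms) to convert $\Vert \chi f\Vert_{H^{2k+2}}$ into $\Vert(-\triangle+1)^{k+1}f\Vert_{L^2(S^2)}$. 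No substantive differences or gaps.
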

\begin{proof} Take $x$-centered coordinate and $\Omega_{R_x}=\{ y\in S^2\ |\ d(x,y)<R \}$. 
\begin{align*} 
\Big| U_{\chi} (t) f(x)- f(x) \Big| &= 
\Big|\frac{1}{2 \pi i}\int_{0}^{2 \pi} \int_{0}^R  \sin r \left\{ \chi(r) 
\frac{1}{t}\sqrt{\frac{r}{\sin r}} \exp\left( \frac{ir^2}{2t}+\frac{iRt}{12}\right) f(r,\theta) \right\}  dr d\theta -f(x) \Big|  \\
& = \Big|\frac{1}{2 \pi i t}\int_{0}^{2 \pi} \int_{0}^R \chi(r)  r \left\{ 
\sqrt{\frac{\sin r}{r}} \exp\left( \frac{ir^2}{2t}+\frac{iRt}{12}\right) f(r,\theta) \right\} dr d\theta -f(x) \Big|  \\
&\leqq 
\Big|\frac{1}{2 \pi i t}\int_{0}^{2 \pi} \int_{0}^R \chi(r) r \left\{ 
\sqrt{\frac{\sin r}{r}} \exp\left( \frac{ir^2}{2t}\right) f(r,\theta) \right\} dr d\theta -f(x) \Big| 
+|(1-e^{iRt/12}) f(x)|.  
\end{align*}
Note that $\frac{\sin r}{r}|_{r=0}=1$. By Corollary 2.4 and putting $k=1$, we have  
$$ \Big| U_{\chi} (t) f(x)- f(x) \Big| \leqq c_1 t \Vert \chi f \Vert_{H^{4}(\Omega_{R_x})}+|(1-e^{iRt/12}) f(x)|. $$
Similarly, $(\frac{r^2-\sin^2 r}{8r^2 \sin^2 r}-\frac{1}{24})|_{r=0}=0$, it follows that 
$$ \Big| E_{\chi_1} (t) f(x) \Big| \leqq c_2 t \Vert \chi f \Vert_{H^{4}(\Omega_{R_x})}+|(1-e^{iRt/12}) f(x)|. $$    
Remarking that $\frac{\partial{\chi}}{\partial r}=0$ on the neighborhood of $r=0$ and putting $k=2$
$$ \Big| E_{\chi_2} (t) f(x) \Big| \leqq c_3 t \Vert \chi f \Vert_{H^{6}(\Omega_{R_x})}+|(1-e^{iRt/12}) f(x)|. $$ 
Thus we only prove   
\begin{align*}
&\Vert \chi f \Vert_{H^6_{flat}(\Omega_{R_x})} \leqq c_4 \Vert (-\triangle+1)^3 f \Vert_{L^2(S^2)}, \tag{1} \\ 
&|f(x)|\leqq c_5 \Vert (-\triangle+1)^3 f \Vert_{L^2(S^2)} \tag{2} \\ 
\end{align*}
on local charts. The second inequality (2) is Sobolev's imbedding theorem on compact Riemannian manifold 
(See e.g. \cite[Theorem 2.20]{Au}, \cite[Theorem 0.2]{Ke}).
So we shall explain the first inequality by using G\r{a}rding inequality of elliptic operators. 
\par 
Take one "atlas" $\mathcal{A}$. Making the change of variables $y=T(x)$ and 
using $dy= |\det T|' dx \leqq \epsilon dx$, 
$$
\Vert \chi f \Vert_{H^6_{flat}(\Omega_x)} 
$$
are equivalent under changing coordinates. 
Furthermore, comaring with flat and spherical metric and using $g_{flat} \sim g_{S^2}$ on small local charts,  
$$
\Vert \chi f \Vert_{H^6_{flat}(\Omega_x)} \leqq c_6 \Vert \chi f \Vert_{H^6(S^2)}.
$$
Let $\phi_i$ be a partition of unity associated to $\mathcal{A}$. 
Recall that $\chi$ is said to be $C^{\infty}$ if $\chi\circ x_i^{-1}\in C^{\infty}$, we find  
$$
\Vert (\phi_i \chi f)\circ x_i^{-1}\Vert_{H^6(S^2)}
=\Vert [ \chi \circ x_{i}^{-1}](\phi_i f )\circ x_{i}^{-1} \Vert_{{H^6}(S^2)}
\leqq C_i \Vert (\phi_{i}f) \circ x_{i}^{-1} \Vert_{H^6(S^2)} 
$$ 
and summing this equation on $i$ shows 
$\Vert \chi f \Vert_{H^6(S^2)} \leqq 
c_7 \Vert f \Vert_{H^6{(S^2)}}$ 
holds with $c_7 \equiv max_{i}C_{i}$. 
Summarizing the calculations, 
$$
\Vert \chi f \Vert_{H^6_{flat}(\Omega_{R_x})} \leqq c_8 \Vert f \Vert_{H^6(S^2)}. \eqno{(3)} 
$$
We apply G\r{a}rding inequality of elliptic operators to (3), 
$$
\Vert \chi f \Vert_{H^6_{flat}(\Omega_{R_x})} \leqq c_8 \Vert f \Vert_{H^6(S^2)}
\leqq C \Vert (-\triangle+1)^3 f \Vert_{L^2(S^2)}.  
$$
\end{proof}
We used Sobolev spaces in the above theorem. 
One of the general technique is use the theory of eigenfunction expansion of $-\triangle$. 
On the unit sphere, any square-integrable function (resp. distribution) 
$f(r, \theta)$ can be expanded as a linear combination of these: 
$$f(r, \theta)=\sum\limits_{l=0}^{\infty}
\sum_{m=-l}^{l} a_{l,m} Y_{l,m}(r, \theta). $$
This expansion holds in the sense of convergence in $L^2$ 
(resp. in ${\mathcal D}'$). 
Specially if
$$
\Vert \{(-\triangle)^k+1\} f \Vert_{L^2(S^2)}^2=
|a_{0,0}|^2+\sum\limits_{l=0}^{\infty}\sum_{m=-l}^{l} 
\{l(l+1)\}^k |a_{l,m}|^2 <\infty
$$
then $f$ is in the Sobolev space $H^k(S^2)$. 
Moreover we can reformulate the spectral projectors by  
$$
\rho(E) f=\sum\limits_{l(l+1)<E}
\sum_{m=-l}^{l} a_{l,m} Y_{l,m}(r, \theta) 
\quad 
\mbox{for}\ f \in \mathcal{D}'(S^2).
$$
\section{Feynman path integral on $S^2$ for low energy functions}
The purpose of this section is to show 
the products of $U_\chi$'s 
converge uniformly for low energy functions in $L^2$. 
We abbreviate $E_{\chi_1}(t)+E_{\chi_2}(t)$ to $E(t)$ in the following sentences. 
\setcounter{section}{3}
\setcounter{theorem}{0}
\begin{proposition}[Time slicing products and energy limits]
For small $\varepsilon>0$, we have 
$$
\lim_{N\rightarrow \infty} [\Vert \{U_\chi(t/N)\}^N 
-e^{\frac{it\triangle}{2}}]\rho({N^{1/3-\varepsilon}}) \Vert_{L^2}=0. 
$$
\end{proposition}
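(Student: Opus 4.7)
The plan is to prove Proposition 3.1 by a Duhamel-plus-telescoping argument in which the spectral cutoff $\rho(N^{1/3-\varepsilon})$ controls the Sobolev-norm growth during the iteration.

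First I would extract a single-step $L^{2}$ error bound for $U_{\chi}(t/N)-e^{i(t/N)\triangle/2}$. By Lemma 2.2 and the computation immediately after it, the modified kernel satisfies $(i\partial_{t}+\tfrac12\triangle_{x})K(t,x,y)=(\cdots)K(t,x,y)$, whose associated operator (once $\chi$ is inserted) is exactly $E(t)=E_{\chi_{1}}(t)+E_{\chi_{2}}(t)$. Combined with $U_{\chi}(t)f\to f$ as $t\to 0^{+}$ from Proposition 2.5, Duhamel's principle gives
$$U_{\chi}(t)f-e^{it\triangle/2}f=-i\int_{0}^{t}e^{i(t-s)\triangle/2}E(s)f\,ds.$$
The pointwise estimate $|E(s)f(x)|\le Cs\|(-\triangle+1)^{3}f\|_{L^{2}}$ of Proposition 2.5, together with $L^{\infty}(S^{2})\hookrightarrow L^{2}(S^{2})$ and the unitarity of $e^{i(t-s)\triangle/2}$, then produces
$$\|U_{\chi}(t)f-e^{it\triangle/2}f\|_{L^{2}}\le Ct^{2}\|(-\triangle+1)^{3}f\|_{L^{2}}.$$

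Next I would apply the non-commutative telescoping identity
$$\{U_{\chi}(t/N)\}^{N}-e^{it\triangle/2}=\sum_{k=0}^{N-1}\{U_{\chi}(t/N)\}^{N-1-k}\bigl[U_{\chi}(t/N)-e^{i(t/N)\triangle/2}\bigr]e^{ik(t/N)\triangle/2}$$
and evaluate it on $\rho(N^{1/3-\varepsilon})f$. Because $e^{ik(t/N)\triangle/2}$ is unitary, commutes with $(-\triangle+1)^{3}$, and leaves the range of $\rho(N^{1/3-\varepsilon})$ invariant, the intermediate vector $g_{k}:=e^{ik(t/N)\triangle/2}\rho(N^{1/3-\varepsilon})f$ obeys
$$\|(-\triangle+1)^{3}g_{k}\|_{L^{2}}\le (N^{1/3-\varepsilon}+1)^{3}\|f\|_{L^{2}}\le CN^{1-3\varepsilon}\|f\|_{L^{2}},$$
and the single-step estimate then yields $\|[U_{\chi}(t/N)-e^{i(t/N)\triangle/2}]g_{k}\|_{L^{2}}\le C(t/N)^{2}N^{1-3\varepsilon}\|f\|_{L^{2}}$. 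Summing the $N$ contributions gives the master inequality
$$\bigl\|[\{U_{\chi}(t/N)\}^{N}-e^{it\triangle/2}]\rho(N^{1/3-\varepsilon})\bigr\|_{L^{2}\to L^{2}}\le Ct^{2}N^{-3\varepsilon}\cdot\sup_{0\le j\le N}\|\{U_{\chi}(t/N)\}^{j}\|_{L^{2}\to L^{2}}.$$
Because $\varepsilon>0$, the factor $N^{-3\varepsilon}$ tends to $0$, so the proof is finished as soon as the supremum on the right is known to be bounded by a constant depending only on $t$.

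The main obstacle is therefore establishing a uniform bound of the form $\|U_{\chi}(t/N)\|_{L^{2}\to L^{2}}\le 1+C(t/N)$ for small $t/N$, which propagates to $\sup_{j}\|\{U_{\chi}(t/N)\}^{j}\|_{L^{2}\to L^{2}}\le (1+C(t/N))^{N}\le e^{Ct}$. Since $U_{\chi}(t/N)$ is an oscillatory integral operator with quadratic phase $S(t/N,x,y)=Nd(x,y)^{2}/(2t)$ and amplitude $\chi(d)\sqrt{V}/(2\pi i)$ supported in $\{d<\pi\}$, the natural route is a $T^{*}T$ argument: compute the kernel of $U_{\chi}(t/N)^{*}U_{\chi}(t/N)$, apply stationary phase near the diagonal where the principal part equals $I$, and verify that the remainder gains a factor $O(t/N)$ in operator norm. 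This parallels Fujiwara's $L^{2}$-boundedness of approximate propagators in $\mathbf{R}^{n}$, but the geometry of $S^{2}$ requires handling the cut locus $\{d=\pi\}$, which is excluded by the support condition on $\chi$. I expect this step to absorb most of the remaining technical work.
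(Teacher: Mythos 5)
Your decomposition is genuinely different from the paper's. The paper does not telescope: it writes $U_\chi(t)=e^{it\triangle/2}\bigl(1+\tilde E(t)\bigr)$ with $\tilde E(t)=\int_0^te^{-is\triangle/2}E(s)\,ds$, observes that every operator in sight has a kernel depending only on $d(x,y)$ and hence commutes with $\triangle$ (so with $e^{is\triangle/2}$ and with each other), and expands $\{e^{it\triangle/2N}(1+\tilde E(t/N))\}^N$ by the binomial theorem. A term with $k$ factors of $\tilde E(t/N)$ costs $(\tilde C/2)^k(t/N)^{2k}\Vert(-\triangle+1)^{3k}f\Vert_{L^2}$, which on the range of $\rho(E)$ is controlled by $(E+1)^{3k}\Vert f\Vert_{L^2}$; the bound $\binom{N}{k}N^{-k}<1/k!$ then sums the series to $\exp\{\tilde C(E+1)^3t^2/2N\}-1$, and $E=N^{1/3-\varepsilon}$ sends this to zero. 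The point of this arrangement is precisely that no uniform bound on $\Vert U_\chi(t/N)^j\Vert_{L^2\to L^2}$ is ever needed. Your telescoping identity reaches the same final rate $Ct^2N^{-3\varepsilon}$ while spending only one factor of $(-\triangle+1)^{3}$ per term, which is cleaner in that respect, but it shifts the entire difficulty onto the factor $\sup_j\Vert\{U_\chi(t/N)\}^j\Vert_{L^2\to L^2}$, which you leave open.

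That last step is a genuine gap, and as you have posed it (an unconditional $\Vert U_\chi(\tau)\Vert_{L^2\to L^2}\le 1+C\tau$) it is harder than your sketch suggests. Such an estimate is proved nowhere in the paper: the error operator $E_{\chi_2}$ carries the factor $id/\tau$, and the paper (Lemma 4.3) only tames it by integrating by parts in time at the price of one power of $(-\triangle+1)$, so the bound it obtains, $\Vert U_\chi(\tau)\rho(E)f\Vert\le(1+C_1|\tau|+C_2\tau^2(E+1))\Vert f\Vert$, is available only after spectral localization. The natural repair of your argument is to note that $U_\chi(\tau)$ commutes with $\rho(E)$ (both have rotation-invariant kernels, hence are functions of $\triangle$), insert $\rho(N^{1/3-\varepsilon})$ into every factor of the telescoping sum, and use the localized bound to get $(1+C_1t/N+C_2t^2N^{1/3-\varepsilon}/N^2)^N\le e^{C|t|}$; this closes the proof, but it borrows the machinery of Section 4, whereas the paper's Proposition 3.1 is self-contained within Sections 2--3. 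Without either that insertion or a fully executed Fujiwara-type $T^*T$ estimate, your proof is incomplete at its acknowledged ``main obstacle.''
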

\begin{proof} For $f(x) \in C^{\infty}(S^2)$,  
$$
\left(i\frac{\partial}{\partial t}+\frac{1}{2}\triangle_x \right) 
U_{\chi}(t) f(x) =E(t) f(x).  
$$
This leads to 
$$
U_{\chi}(t) f(x) 
=e^{\frac{it\triangle}{2}}
\left(1+ \int_{0}^t e^{\frac{-is\triangle}{2}}  E(s) ds \right) f(x).   
$$
Here 
$
\Vert (\int_{0}^t e^{\frac{-is \triangle}{2}}  E(s) f(x) ds) \Vert_{L^2}
\leqq  \int_{0}^t \Vert e^{\frac{-is\triangle}{2}}  E(s) f(x) \Vert_{L^2} ds 
\leqq  \int_{0}^t \tilde C s \Vert (-\triangle+1)^3 f(x) \Vert_{L^2} ds 
\leqq  \frac{\tilde C t^2}{2} \Vert (-\triangle+1)^3 f(x) \Vert_{L^2}
$. 
Let $
\tilde E(t) f(x) \equiv \int_{0}^t e^{\frac{-is \triangle}{2}}  E(s) f(x) ds$.  $\tilde E(t) \triangle =\triangle \tilde E(t) $ and  
consequently we have 
\begin{align*}
\Vert \underbrace {e^{\frac{it \triangle}{2N}} e^{\frac{it \triangle}{2N}} \cdots e^{\frac{it \triangle}{2N}}}_{N-k\ \mbox{times}} 
\underbrace {\tilde E(t/N) \tilde E(t/N) \cdots \tilde E(t/N)}_{k\ \mbox{times}} f(x)\Vert_{L^2} 
\leqq \Big(\frac{\tilde{C}}{2}\Big)^k \Big(\frac{t}{N}\Big)^{2k} \Vert (-\triangle+1)^{3k} f(x) \Vert_{L^2}. 
\end{align*}
The binomical coefficients bounds $\begin{pmatrix} N \\ k \end{pmatrix}\frac{1}{N^k}<\frac{1}{k!}$ 
yields the following estimates 
\begin{align*}
\Vert \{e^{it\triangle/2}-U_\chi(t/N)^n\}f(x) \Vert_{L^2} 
&=\Vert \left[e^{it\triangle/2}-\{e^{it\triangle/N}(1+\tilde E(t/N))\}^N \right] f(x) \Vert_{L^2} \\  
&\leqq \sum_{k=1}^N 
\begin{pmatrix} N \\ k \end{pmatrix} 
\Vert \{e^{i(N-k)t\triangle/N} \tilde E(t/N)^{k}\}f(x)  \Vert_{L^2} \\
&\leqq \sum_{k=1}^N 
\begin{pmatrix} N \\ k \end{pmatrix} 
\Big(\frac{\tilde{C}}{2}\Big)^k \Big(\frac{t}{N}\Big)^{2k} 
\Vert (-\triangle+1)^{3k} f(x) \Vert_{L^2} \\
&\leqq \sum_{k=1}^N 
\frac{1}{k!}
\Big(\frac{\tilde{C}t^2}{2N} \Big)^k 
\Vert (-\triangle+1)^{3k} f(x) \Vert_{L^2}. 
\end{align*}
By using $\Vert (-\triangle+1)^{3k} \rho(E) f(x) 
\Vert_{L^2}\leqq (E+1)^{3k} \Vert f(x) \Vert_{L^2}$,    
\begin{align*}
\Vert \{e^{it\triangle/2}-U_\chi(t/N)^N\} \rho(E) f(x) \Vert_{L^2} 
&\leqq \sum_{k=1}^N 
\frac{1}{k!}
\Big\{\frac{\tilde C(E+1)^3 t^2}{2N}\Big\}^k \Vert f(x) \Vert_{L^2} \\ 
&\leqq \Big[\exp \Big\{ \frac{\tilde C(E+1)^3 t^2}{2N} \Big\}-1 \Big] \Vert f(x) \Vert_{L^2} \\
&\leqq \frac{C_2(E+1)^3 t^2}{2N} \Vert f(x) \Vert_{L^2}. 
\end{align*}
Thus for small $\epsilon >0$, 
$$
\lim_{N\rightarrow \infty} \Vert [\{U_\chi(t/N)\}^N  
-e^{\frac{it\triangle}{2}}] 
\rho({N^{1/3-\varepsilon}}) \Vert_{L^2} \leqq 
\lim_{N\rightarrow \infty} \frac{C_2(N^{1/3-\varepsilon}+1)^3 t^2}{2N}=0. $$ 
\end{proof} 
\par 
\begin{remark}
We note that $s-\lim\limits_{E\rightarrow \infty} e^{\frac{it\triangle}{2}} \rho(E) f(x)
=e^{\frac{it\triangle}{2}} f(x)$, so 
$$
s-\lim_{N\rightarrow \infty} \{U_\chi(t/N)\}^N 
\rho({N^{1/3-\epsilon}}) f(x) 
=e^{\frac{it\triangle}{2}} f(x) \quad\mbox{for}\ \forall\; f(x)\in L^2(S^2).  
$$
In $\S 4$, we show the stronger result by substituing $\rho({N}) $ for $\rho({N^{1/3-\epsilon}}) $. 
\end{remark}
\par
\begin{remark}
Some Trotter-Kato formulas for Feynman's operational caluculus contain  
infinite many spectral projectors, however we used a spectral projector once only(See \cite{I-T}).  
\end{remark}
\begin{remark} Gevrey classes on real compact manifolds are studied by many mathematicians. 
Gevrey classes are Fr\'echet spaces, and by \cite[Corollary 8]{FM} we know  
$$
f(x)\ \in G_{\sigma}(S^2)\  \Longleftrightarrow \  
\forall t>0,\ \lim\limits_{k\rightarrow \infty}|{S_k}|_{S^2} \exp (tk^{1/\sigma})=0
$$
where $f(x)=\sum_{k=0}^{\infty} S_k(\omega)$ is the spherical expansion of $f$. 
For $f(x)\ \in G_{1/6}(S^2)$
\par
\begin{align*}
\Vert \{e^{it\triangle/2}-U_\chi(t/N)^N\}\sum_{l=0}^{\infty} S_l(\omega) \Vert_{L^2} 
&\leqq \sum_{l=0}^{\infty} \sum_{k=1}^N 
\frac{1}{k!}
\Big\{\frac{\tilde C(l(l+1)+1)^3 t^2}{2N}\Big\}^k \Vert S_l \Vert_{L^2} \\ 
&\leqq \sum_{l=0}^{\infty} \Big[\exp \Big\{ \frac{\tilde C(l(l+1)+1)^3 t^2}{2N} \Big\}-1 \Big] \Vert S_l \Vert_{L^2} \\
&\leqq \sum_{l=0}^{\infty} \Big[\exp \Big\{ \frac{\tilde C(l(l+1)+1)^3 t^2}{2N} \Big\}-1 \Big] C| S_l |_{S^2} \\ 
& \rightarrow \ 0 \quad \mbox{as}\ N\rightarrow \infty.  
\end{align*}
Here we used 
the eigenfunction growth 
$\Vert S_k \Vert_{L^2}\leqq C |S_k|_{S^2} $.  
\end{remark}
\setcounter{section}{3}
\setcounter{theorem}{1}
\section{Strong limits for high energy functions}
In this section, we have 
the strong but not uniform convergence of time slicing products. 
To do this, we introduce the $L^2$ estimates known as H\"ormander and Maslov's 
theorem (See e.g. \cite[Theorem 2.1.1]{S} for more details). 
\begin{lemma} Let $a\in C_0({\mathbf{R}^n})$ and assume that 
$\Phi \in C^\infty$ satisfies $|\nabla \Phi|\geqq c >0$ on supp\ $a$. 
Then for all $\lambda>1$, 
$$
\Big|\int_{\mathbf{R}^n} a(x)e^{i \lambda \Phi(x)}\; dx \Big| 
\leqq C_{N} \lambda^{-N}, \quad N=1, 2, \cdots
$$ 
where $C_N$ depends only on $c$ if $\Phi$ and $a$ belong to a bounded 
subset of $C^{\infty}$ and $a$ is supported in a fixed compact set. 
\end{lemma}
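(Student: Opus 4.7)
The plan is the classical non-stationary phase argument by iterated integration by parts. First I would introduce the first-order differential operator
$$L u(x) = \frac{1}{i\lambda\,|\nabla\Phi(x)|^2}\sum_{j=1}^n (\partial_j\Phi)(x)\,\partial_j u(x),$$
which is well-defined and smooth on a neighborhood of $\mathrm{supp}\,a$ precisely because the hypothesis $|\nabla\Phi|\geqq c>0$ holds there. By construction $L(e^{i\lambda\Phi})=e^{i\lambda\Phi}$. Substituting this identity into the integral and integrating by parts $N$ times (no boundary terms appear, since $a$ has compact support) yields
$$\int_{\mathbf{R}^n} a(x)\,e^{i\lambda\Phi(x)}\,dx = \int_{\mathbf{R}^n} \bigl((L^t)^N a\bigr)(x)\,e^{i\lambda\Phi(x)}\,dx,$$
where $L^t v = -\frac{1}{i\lambda}\sum_{j} \partial_j\!\left(\frac{\partial_j\Phi}{|\nabla\Phi|^2}\,v\right)$ is the formal transpose.

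Next I would show by induction on $N$ that $(L^t)^N a$ is a smooth function supported in $\mathrm{supp}\,a$ and satisfies $|(L^t)^N a(x)|\leqq \tilde C_N\,\lambda^{-N}$ pointwise. Each application of $L^t$ contributes a factor of $\lambda^{-1}$ and, by Leibniz, produces a finite sum of terms involving derivatives of $a$ up to order $N$, derivatives of $\Phi$ up to order $N+1$, and rational expressions in $\partial_j\Phi$ with denominators bounded by powers of $|\nabla\Phi|^2 \geqq c^2$. Estimating the resulting integral trivially by the supremum times the measure of $\mathrm{supp}\,a$ then gives
$$\Bigl|\int_{\mathbf{R}^n} a(x)\,e^{i\lambda\Phi(x)}\,dx\Bigr| \leqq C_N\,\lambda^{-N},$$
which is the claim.

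The essential obstacle is not really a hard estimate but the bookkeeping required to verify that $C_N$ depends only on the advertised data, namely the lower bound $c$, a fixed compact set containing $\mathrm{supp}\,a$, and a bounded $C^\infty$-neighborhood to which $\Phi$ and $a$ belong, and not on the individual functions. Once $(L^t)^N a$ is expanded as a sum over multi-indices, each term is manifestly a product of $\lambda^{-N}$ with a quantity controlled by finitely many seminorms of $a$ and $\Phi$ together with negative powers of $c$, so this uniformity is clear from the formula. This is exactly what is needed later in \S 4 to apply the lemma with $\Phi$ and $a$ varying in a bounded family.
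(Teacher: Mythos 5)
Your proof is correct, and it is the same underlying technique as the paper's --- non-stationary phase via repeated integration by parts against a first-order operator $L$ with $L(e^{i\lambda\Phi})=e^{i\lambda\Phi}$ --- but the execution differs in one structural respect. The paper (following Sogge) first covers $\mathrm{supp}\,a$ by finitely many pieces on each of which some fixed direction $\nu_j$ satisfies $|\nu_j\cdot\nabla\Phi|\geqq c/2$, rotates coordinates so that $\nu_j=(1,0,\dots,0)$, and uses the one-dimensional operator $L=\frac{1}{i\lambda\,\partial\Phi/\partial x_1}\frac{\partial}{\partial x_1}$ on each piece; you instead take the full-gradient operator $L=\frac{1}{i\lambda|\nabla\Phi|^2}\nabla\Phi\cdot\nabla$, which reproduces the phase on all of $\mathrm{supp}\,a$ at once, so no partition of unity or choice of directions is needed. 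What your route buys is economy: the localization step disappears entirely, and the uniformity of $C_N$ in $c$ and in the $C^\infty$-bounds on $a$ and $\Phi$ is read off directly from the Leibniz expansion of $(L^t)^N a$. What the paper's route buys is slightly lighter bookkeeping in the iteration, since only $\partial_{x_1}$-derivatives ever appear in $(L^*)^N a_j$. One shared caveat: both arguments differentiate $a$ up to order $N$, so despite the hypothesis ``$a\in C_0$'' in the statement, smoothness of $a$ is implicitly being used (as the final uniformity clause, which places $a$ in a bounded subset of $C^\infty$, presupposes); this is not a defect of your write-up relative to the paper's.
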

\begin{proof}
Given $x_0 \in \mbox{supp}\; a$ there is a direction $\nu \in S^{n-1}$ 
such that $|\nu \cdot \nabla \Phi|\geqq \frac{c}{2}$ on some 
ball centered at $x_0$. Thus, by compactness, we can choose a partition 
of unity $\alpha_j \in C_0^{\infty}$ consisting of a finite number 
of terms and corresponding unit vectors $\nu_{j}$ 
such that $\sum \alpha_j(x)=1$ on $\mbox{supp}\ a$ and 
$|\nu_j \cdot \nabla \Phi|\geqq \frac{c}{2}$ on $\mbox{supp}\ \alpha_j$. 
If we set $a_j(x)=\alpha_{j}(x)a(x)$, it suffices to prove that 
for each $j$ 
$$
\Big|\int_{\mathbf{R}^n} a_j(x)e^{i \lambda \Phi(x)}\; dx \Big| 
\leqq C_{N} \lambda^{-N}, \quad N=1, 2, \cdots
$$ 
After possibly changing coordinates we may assume that 
$\nu_j=(1, 0, \ldots, 0)$ which means that 
$|\partial \Phi/ \partial x_1| \geqq c/2$ on supp $a_j$. If we let 
$$
L(x, D)=\frac{1}{i\lambda \partial \Phi/ \partial x_1}
\frac{\partial}{\partial x_1}, 
$$
then 
$L(x, D)e^{i\lambda \Phi(x)}=e^{i\lambda \Phi(x)}$. 
Consequently, if 
$
L^* = L^*(x, D)=\frac{\partial}{\partial x_1}
\frac{1}{i\lambda \partial \Phi/ \partial x_1}
$
is the adjoint, then 
$$
\int_{\mathbf{R}^n} a_j(x)e^{i \lambda \Phi(x)}\; dx = 
\int_{\mathbf{R}^n} (L^*)^N a_j(x)e^{i \lambda \Phi(x)}\; dx.  
$$
Since our assumptions imply that $(L^*)^N a_j=O(\lambda^{-N})$, 
the results follows. 
\end{proof}
\begin{lemma}
Suppose that $\phi(x, y)$ is a real $C^{\infty}$ function satisfying 
the non-degeneracy condition 
$$
\det \left(\frac{\partial^2 \phi}{{\partial x_j}{\partial y_k}}\right)\not =0
$$
on the support $a(x, y)\in C_0^{\infty}(\mathbf{R}^2 \times \mathbf{R}^2)$. 
Then for $t>0$, 
$$
\Vert \int_{\mathbf{R}^2} e^{i\frac{\phi(x, y)}{2t}} a(x, y) f(y)\; dy \Vert_{L^2({\mathbf{R}}^2)}
\leqq C t \Vert f \Vert_{L^2({\mathbf{R}}^2)}. 
$$
where $C$ is indep. of $t$ and $f(x)$. 
\end{lemma}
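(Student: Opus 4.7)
The natural approach is the $TT^*$ method. Writing $Tf(x)=\int_{\mathbf{R}^2} e^{i\phi(x,y)/(2t)} a(x,y) f(y)\,dy$, we have $\Vert T\Vert_{L^2\to L^2}^2=\Vert TT^*\Vert_{L^2\to L^2}$, so it suffices to bound the operator whose integral kernel is
\[
K(x,z)=\int_{\mathbf{R}^2} e^{i[\phi(x,y)-\phi(z,y)]/(2t)}\,a(x,y)\,\overline{a(z,y)}\,dy.
\]
This is an oscillatory integral in $y$ with effective large parameter $1/(2t)$, and my plan is to obtain a pointwise decay estimate on $|K(x,z)|$ from Lemma 4.1 and then finish with Schur's test.

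The first step is to use a partition of unity to arrange that $a$ is supported in a neighborhood so small that $\partial^2\phi/\partial x \partial y$ is uniformly invertible there. Taylor's formula in the $x$-variable then yields
\[
\nabla_y\phi(x,y)-\nabla_y\phi(z,y)=\Bigl[\int_0^1 \partial^2_{xy}\phi(z+s(x-z),y)\,ds\Bigr](x-z),
\]
and if the neighborhood is small enough the bracketed matrix remains invertible with uniformly bounded inverse. This gives the crucial lower bound
\[
\bigl|\nabla_y[\phi(x,y)-\phi(z,y)]\bigr|\geq c\,|x-z|
\]
on the support of the integrand in $K(x,z)$.

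Next I apply Lemma 4.1 to the $y$-integral in $K(x,z)$, taking $\lambda=|x-z|/(2t)$ as the large parameter. This produces, for every $N>0$,
\[
|K(x,z)|\leq C_N\Bigl(\frac{t}{|x-z|}\Bigr)^N\qquad\text{whenever }|x-z|\geq t,
\]
while the trivial bound $|K(x,z)|\leq C$ (from compactness of $\operatorname{supp}\,a$) handles the region $|x-z|\leq t$. Schur's test now requires
\[
\int |K(x,z)|\,dz\leq C\int_{|x-z|\leq t}dz+C_N\int_{|x-z|\geq t}\Bigl(\frac{t}{|x-z|}\Bigr)^N dz\leq C'\,t^2,
\]
valid for any $N>2$ since we are working in $\mathbf{R}^2$; the symmetric bound for $\int|K(x,z)|\,dx$ is identical. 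Hence $\Vert TT^*\Vert\leq C t^2$ and $\Vert T\Vert\leq C t$. For $t$ bounded away from $0$ the trivial estimate $\Vert T\Vert\leq \Vert a\Vert_{L^2}$ suffices.

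The main obstacle is the first step: pointwise non-degeneracy of $\partial^2_{xy}\phi$ only delivers invertibility locally, so a partition of unity and careful tracking of uniform constants are needed to globalize the linear lower bound $|\nabla_y\Phi|\gtrsim |x-z|/t$ across the compact support of $a$. Once that is in place, the rest reduces to integration by parts (packaged inside Lemma 4.1) and a routine application of Schur's test.
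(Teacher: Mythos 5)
Your proposal is correct and follows essentially the same route as the paper: a partition of unity to exploit the non-degeneracy of $\partial^2\phi/\partial x_j\partial y_k$ and obtain a linear lower bound on the gradient of the phase difference, then Lemma 4.1 for rapid decay of the composed kernel, then Schur's test to get the $O(t^2)$ bound on the squared operator norm. The only (cosmetic) difference is that you bound $TT^*$, whose kernel integrates over $y$ and decays in $|x-z|$, whereas the paper expands $\Vert T_tf\Vert_2^2$ directly as a bilinear form with kernel $K_t(y,z)$ integrated over $x$ and decaying in $|y-z|$; by the symmetry of the non-degeneracy hypothesis these are interchangeable.
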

\begin{proof}
We note that 
$$ 
\nabla_x[\phi(x, y)-\phi(x, z)]
=\left(\frac{\partial^2 \phi(x, y)}{{\partial x_j}{\partial y_k}}\right)(y-z)
+O(|y-z|^2). 
$$
By using a smooth partition of unity we can decompose 
$a(x, y)$ into a finite number of pieces each of which 
has the property that 
$$
|\nabla[\phi(x, y) - \phi(x, z)]|\geqq c|y-z| \quad \mbox{on supp}\ a,  \eqno(1)
$$
for some $c>0$. 
\par 
To use this we notice that 
$$
\Vert T_{t} f\Vert_2^2=\int\int K_t(y, z) f(y)\overline{f(z)}\ dy\; dz,   \eqno(2)
$$
where 
$$
K_{t}(y, z)=\int_{{\mathbf{R}}^2} e^{\frac{i}{t}[\phi(x, y) - \phi(x, z)]} a(x, y) \overline{a(x, z)} dx. 
$$
However, (1) and Lemma 4.1 imply that 
$$
|K_{t}(y, z)|\leqq C_N(1+\frac{1}{t}|y-z|)^{-N} \quad \mbox{for}\ \ \forall N. 
$$
Take $N=2$. By appling Shur test, 
the operator with kernel $K_t$ sends 
$L^2$ into itself with norm $O(t^2)$. This along with (2) yields 
$$
\Vert T_{t} f\Vert^2_{L^2({\mathbf{R}}^2)} 
\leqq C t^2 \Vert f \Vert^2_{L^2({\mathbf{R}}^2)}, 
$$
as desired. 
\end{proof}
\begin{lemma}
\begin{align*}
\Vert (\int_{0}^t e^{\frac{-is \triangle}{2}} E_{\chi}(s) f(x) ds) \Vert_{L^2}
\leqq 
C_1 t \Vert  f(x) \Vert_{L^2}  
+ {C_2 t^2} \Vert (-\triangle+1) f(x) \Vert_{L^2}
\end{align*} 
\end{lemma}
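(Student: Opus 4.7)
The plan is to decompose $E_\chi(s)=E_{\chi_1}(s)+E_{\chi_2}(s)$ as in \S 2 and treat each term with Lemma 4.2, using an integration by parts to overcome the singular $1/s^2$ amplitude that appears in $E_{\chi_2}$. Throughout I work on local charts via geodesic normal coordinates centered at $x$ together with a fixed partition of unity subordinate to a finite atlas of $S^2$; in each chart $d(x,y)=|y|$ and the phase $S(s,x,y)=|y|^2/(2s)$ has non-degenerate mixed Hessian, which is exactly the setting of Lemma 4.2.

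For $E_{\chi_1}(s)$, after absorbing the $1/s$ from the prefactor of $K(s,x,y)$ the operator has the form $\frac{1}{s}\int a_1(x,y)\,e^{iS(s,x,y)+iRs/12}f(y)\,dy$ with $a_1\in C_0^\infty$ supported in $\{d<\pi\}$. Lemma 4.2 yields $\Vert E_{\chi_1}(s)f\Vert_{L^2}\leqq C\Vert f\Vert_{L^2}$ uniformly in $s$, and since $e^{-is\triangle/2}$ is unitary,
$$\Big\Vert \int_0^t e^{-is\triangle/2}E_{\chi_1}(s)f\,ds\Big\Vert_{L^2}\leqq \int_0^t C\Vert f\Vert_{L^2}\,ds=Ct\Vert f\Vert_{L^2},$$
which produces the $C_1 t\Vert f\Vert_{L^2}$ term.

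For $E_{\chi_2}(s)$ the amplitude carries a $1/s^2$ prefactor, but its support avoids $d=0$: since $\partial\chi/\partial d$ vanishes near both $d=0$ and $d=\pi$, the amplitude is supported in some annulus $\{c_0\leqq d\leqq \pi-\delta\}$, on which $|\nabla_y S|^2=d^2/s^2\geqq c_0^2/s^2$. I exploit this by writing
$$e^{iS}=-\frac{1}{|\nabla_y S|^2}\,\triangle_y e^{iS}+\frac{i\,\triangle_y S}{|\nabla_y S|^2}\,e^{iS}$$
and integrating by parts via the self-adjointness of $\triangle_y$. The factor $1/|\nabla_y S|^2=s^2/d^2$ exactly cancels the $1/s^2$ prefactor, rewriting $E_{\chi_2}(s)f$ as a finite sum of operators of the form $\int b(x,y)\,e^{iS(s,x,y)}g(y)\,dy$, where $b$ is smooth, $s$-independent, and compactly supported in the annulus, and $g\in\{f,\nabla_y f,\triangle_y f\}$, plus one residual piece with a $1/s$ prefactor and $g=f$ (coming from $\triangle_y S/|\nabla_y S|^2=O(s)$, so the cancellation here is only partial). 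Lemma 4.2 applied termwise gives
$$\Vert E_{\chi_2}(s)f\Vert_{L^2}\leqq Cs\Vert(-\triangle+1)f\Vert_{L^2}+C'\Vert f\Vert_{L^2},$$
and integration in $s$ supplies the $C_2 t^2\Vert(-\triangle+1)f\Vert_{L^2}$ term (the $C'\Vert f\Vert$ piece absorbs into $C_1 t\Vert f\Vert_{L^2}$).

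The main technical obstacle is bookkeeping when passing from the manifold setting to the flat setting of Lemma 4.2: one must fix a finite geodesic normal atlas and partition of unity, check that the mixed Hessian of $S$ is non-degenerate in each chart, and verify that each amplitude produced by the integration by parts remains smooth after being divided by $d^2$, which is guaranteed because the cutoff $\partial\chi/\partial d$ excludes both $d=0$ and the antipodal singularity at $d=\pi$ where $\sqrt{d/\sin d}$ would blow up. Once these routine verifications are in place, the estimate reduces to the two applications of Lemma 4.2 described above.
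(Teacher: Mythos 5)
Your treatment of $E_{\chi_1}$ coincides with the paper's: localize with a partition of unity, apply Lemma 4.2 to the non-degenerate phase $d^2(x,y)$ so that the $O(s)$ operator norm cancels the $1/s$ prefactor of $K$, and integrate in $s$ using unitarity of $e^{-is\triangle/2}$; this gives the $C_1 t\Vert f\Vert_{L^2}$ term in both arguments. For $E_{\chi_2}$ you take a genuinely different route. The paper integrates by parts in \emph{time}: it observes that the integrand of $E_{\chi_2}(s)$ is (essentially) $\partial_s$ of the operator with kernel $(-\chi'/d)\,sK(s,x,y)$, moves the $s$-derivative onto the propagator inside $\int_0^t e^{-is\triangle/2}E_{\chi_2}(s)\,ds$, and the resulting factor $\tfrac{-is\triangle}{2}$ is where $\Vert(-\triangle+1)f\Vert_{L^2}$ enters. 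You integrate by parts in \emph{space}: since $\mathrm{supp}\,(\partial\chi/\partial d)$ avoids $d=0$ (and $d=\pi$), the phase is non-stationary there with $|\nabla_y S|=d/s\geqq c_0/s$, and your identity $e^{iS}=-|\nabla_yS|^{-2}\triangle_y e^{iS}+i(\triangle_yS)|\nabla_yS|^{-2}e^{iS}$ trades the $1/s^2$ amplitude for derivatives landing on $f$, after which Lemma 4.2 applies termwise and integration in $s$ gives $C_2t^2\Vert(-\triangle+1)f\Vert_{L^2}$ plus a residual $O(t)\Vert f\Vert_{L^2}$. Both yield the stated bound. What your version buys is that every piece remains an oscillatory integral operator to which Lemma 4.2 applies verbatim and the origin of $(-\triangle+1)f$ is completely explicit; the paper's version is shorter but leaves implicit how $\Vert\triangle\, e^{-is\triangle/2}T_s f\Vert_{L^2}$ is controlled when the time derivative hits the propagator. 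One point to tighten in your write-up: geodesic normal coordinates centered at $x$ move with $x$, so for Lemma 4.2 you should keep the phase $\phi(x,y)=d(x,y)^2$ in a fixed chart (non-degeneracy of $\partial^2_{xy}d^2$ is exactly Lemma 2.1) and carry out the integration by parts invariantly with the Riemannian $\nabla_y$ and $\triangle_y$, using $|\nabla_y d|=1$ so that $|\nabla_yS|=d/s$ and $\triangle_yS=O(1/s)$ on the annulus; as you say, this is routine and does not affect the argument.
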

\begin{proof}
We shall use the partition of unity $\{\phi_i\}$ on $S^2$ 
with very small support $\mbox{diam}\ \phi_i <\epsilon$.
\par
If $d(\mbox{supp}(\phi_i) , \mbox{supp}(\phi_j))>R+2\epsilon$, 
$$
\phi_j(x) \{ E_{\chi}(t) (\phi_i(y) f(y))\}(x)=0. 
$$
So we may assume $\phi_i$ and $\phi_j$ are contained in one local chart.   
The same calculation for $E_\chi$ on local charts as lemma 4.2 implies 
$$
\Vert T_{t, i, j, k, l} f\Vert_2^2=\int\int K_{t, i, j, k, l}(y, z)
\{g^{1/2}(y)f(y)\} \overline{\{g^{1/2}(z)f(z)\}}\ dy\; dz,   
$$
where 
$$
K_{t, i, j, k, l}(y, z)=\int_{{\mathbf{R}}^2} e^{\frac{i}{2t}[d^2(x, y) - d^2(x, z)]} \phi_i(x) \overline{\phi_j(x)} 
\phi_{k}(y) \overline{\phi_{l}(z)}
a(x, y) \overline{a(x, z)} dx. 
$$
We give a simple explanation of the boundedness of $T_{t, i, j, k, l}$. 
Take one chart on Riemann sphere. Then the Fubini Study metric is 
$$
ds^2=\frac{4}{1+u^2+v^2}(du^2+dv^2). 
$$
This leads to $g(u, v)=\frac{16}{(1+u^2+v^2)^2}$ and $c_1<g(u,v)<c_2$. 
From lemma 2.1, 
$$
\det _{ij} \left(\frac{\partial^2 d^2}{{\partial {x_i}}{\partial {y_j}}}\right)
=g^{1/2}(x)g^{1/2}(y) \frac{d(x,y)}{\sin d(x,y)} 
\quad \text{for}\ 0\leqq d <\pi .
$$
We have 
$$
\det _{ij} \left(\frac{\partial^2 d^2}{{\partial {x_i}}{\partial {y_j}}}\right)>c \quad \text{for}\ 0\leqq d <R-2\epsilon.
$$
Applying lemma 4.2, we have $ \Vert T_{t, i, j, k, l} f\Vert_2<C $.  
$i, j$'s are finite and we conclude 
$$
\Vert E_{\chi_1}(t) f \Vert_{L^2} =\Vert \sum\limits_{i,j} \phi_i(x) 
\{E_{\chi_1}(t) \phi_j(y) f\}(x) \Vert_{L^2}<C_1.  \eqno{(3)}  
$$
For $E_{\chi_2}$, we have 
\begin{align*}
E_{\chi_2}(t)f(x) 
&=\frac{1}{2\pi i}\int_{S^2}(\frac{\partial \chi}{\partial d} 
\cdot \frac{1}{d} ) 
\left(\frac{id^2}{t} \right) K(t,x,y)f(y)\; dy \\
&=\frac{\partial}{\partial t} 
\Big[\frac{1}{2\pi i}\int_{S^2}
(-\frac{\partial \chi}{\partial d}\cdot \frac{1}{d} ) 
\left\{t K(t,x,y)\right\} f(y)\; dy\Big] \\
\end{align*}
$(-\frac{\partial \chi}{\partial d}\cdot \frac{1}{d} )$ is bounded. So 
\begin{align*}
\Vert (\int_{0}^t e^{\frac{-is \triangle}{2}}  
E_{\chi_2}(s) f(x) ds) \Vert_{L^2}
&=\Vert \Big[e^{\frac{-is\triangle}{2}} 
\big\{
\frac{1}{2\pi i}\int_{S^2}
(-\frac{\partial \chi}{\partial d}\cdot \frac{1}{d} ) 
\left\{s K(s,x,y)\right\} f(y)\; dy \big\} \Big]_0^t\Vert_{L^2} \\
& \quad +\Vert \left(\int_{0}^t \frac{-is \triangle}{2} e^{\frac{-is \triangle}{2}}  
\big\{ \frac{1}{2\pi i}\int_{S^2}
(-\frac{\partial \chi}{\partial d}\cdot \frac{1}{d} ) 
\left\{s K(s,x,y)\right\} f(y)\; dy\big\} \; ds \right)\Vert_{L^2}
\\
& \leqq  \frac{C_2 t^2}{2} \Vert (-\triangle+1) f(x) \Vert_{L^2} \tag{4}. 
\end{align*}
Summarizing (3) and (4), we have 
\begin{align*}
\Vert (\int_{0}^t e^{\frac{-is \triangle}{2}}  
\{ E_{\chi_1}(s)+E_{\chi_2}(s) \} f(x) ds) \Vert_{L^2}
\leqq 
C_1 t \Vert  f(x) \Vert_{L^2}  
+ {C_2 t^2} \Vert (-\triangle+1) f(x) \Vert_{L^2}
\end{align*} 
\end{proof}
It follows that $\{U_{\chi}(t/N)\}^N \rho(N)$ are uniformly bounded, 
so we have the strong limit. 
\begin{main}[Time slicing strong limits]
$$s-\lim_{N\rightarrow \infty} \{U_{\chi}(t/N)\}^N \rho(N) f(x)= e^{\frac{it\triangle}{2}} f(x)
\quad \mbox{for}\; \forall \; f(x) \in L^2(S^2). $$
\end{main}
\begin{proof}
By lemma 4.3, 
$\Vert U_{\chi}(t) \rho(E) f(x)\Vert \leqq \{1+C_1|t|+C_2 t^2 (E+1)\} \Vert f(x)\Vert_{L^2}$. 
Consequently 
$$
\Vert \{U_{\chi}(t/N) \rho(N) \}^N f(x)\Vert 
\leqq {(1+C_1|t|/N+C_2(N+1)t^2/N^2)}^{N}\Vert f(x)\Vert_{L^2}<e^{C|t|}\Vert f(x)\Vert_{L^2}. 
$$ 
The estimates of Proposition 3.1 yields 
\begin{align*}
\lim_{N\rightarrow \infty} \Vert (e^{\frac{it\triangle}{2}}-\{U_\chi(t/N)\}^N \rho(N) ) f(x) \Vert_{L^2}  
\leqq &
\lim_{N\rightarrow \infty} [\Vert e^{\frac{it\triangle}{2}} 
(1- \rho({N^{1/3-\varepsilon}}) ) f(x) \Vert_{L^2} \\
& + \Vert (e^{\frac{it\triangle}{2}} -\{U_\chi(t/N)\}^N)\ \rho({N^{1/3-\varepsilon}}) f(x) \Vert_{L^2} \\
&+\Vert \{U_\chi(t/N)\}^N\ (\rho(N) - \rho({N^{1/3-\varepsilon}})) f(x) \Vert_{L^2}] \\
=&0. 
\end{align*}
\end{proof}
\par
The counterexample for non-uniform convergence is 
derived from eigenvalue estimates for $U_{\chi}$. 
\par
For $x=(0,\ 0,\ 1) \in S^2$, we know 
$ Y_{l, 0}(0, 0)=\sqrt{\frac{2l+1}{2}} $ and 
$ Y_{l, m}(0,0)=0$ if $m\not=0$ . This leads to  
$$ \delta_{x}=\sum\limits_{l=0}^{\infty} \sqrt{\frac{2l+1}{2}} Y_{l,0} $$ 
in disribution sense. 
$U_{\chi}$ is well-defined on $\mathcal{D}'$. Remarking that 
$U_{\chi}$ and $-\triangle$ commute,  
$$ 
 U_{\chi}(\frac{t}{N}) \delta_{x}
=\frac{N}{2\pi i t}\chi(\theta) \sqrt{\frac{\theta}{\sin \theta}}e^{\frac{iN\theta^2}{2t}} 
=\sum\limits_{l=0}^{\infty} \alpha_l(t/N) \sqrt{\frac{2l+1}{2}} Y_{l,0}
$$
is square integrable with $L^2$ norm $CN/|t|$, where $\alpha_{l}(t/N)$ are eigenvalues of $U_{\chi}(t/N)$. 
$$
C^2(N/t)^2= \Vert \frac{N}{2\pi i t}\chi(\theta) \sqrt{\frac{\theta}{\sin \theta}}e^{\frac{iN\theta^2}{2t}} \Vert_{L^2}^2  
=\sum\limits_{l=0}^{\infty} {\frac{2l+1}{2}} |\alpha_l(t/N)|^2. 
$$
For each $N\in \mathbf{N}$, there exists $l_N$ satisfying $|\alpha_{l_N}(t/N)|<1/2$ and Riemann Lebesgue type estimate holds : 
$$
\Vert \{ U_{\chi}(\frac{t}{N}) \}^N Y_{l_N,0} \Vert_{L^2}=|\alpha_{l_N}(t/N)|^N <(1/2)^N. 
$$  
Immediately we have the following: 
\begin{theorem}(non-uniform convergence)
$$
\Vert e^{\frac{it\triangle}{2}}-\{U_\chi(t/N)\}^N \Vert > 1/2 
$$
\end{theorem}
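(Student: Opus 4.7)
The plan is to exhibit, for each $N\in\mathbf{N}$, a single unit eigenfunction of $-\triangle$ on which the operator difference $e^{it\triangle/2}-\{U_{\chi}(t/N)\}^{N}$ has $L^{2}$-norm strictly greater than $1/2$. The index $l_{N}$ with $|\alpha_{l_{N}}(t/N)|<1/2$ has already been produced just above the theorem from the $L^{2}$-expansion of $U_{\chi}(t/N)\delta_{x}$, so the remaining task is to convert this eigenvalue inequality into an operator-norm bound.

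The first step is to note that $U_{\chi}(s)$ commutes with $-\triangle$ and is therefore diagonalized by the spherical harmonics, with eigenvalue depending only on $l$. The kernel $\chi(d(x,y))\sqrt{V(s,x,y)}\,e^{iS(s,x,y)+iRs/12}$ depends on $(x,y)$ only through the geodesic distance $d(x,y)$, and is therefore invariant under the diagonal action of $SO(3)$. Each eigenspace $\mathrm{span}\{Y_{l,m}\}_{|m|\le l}$ of $-\triangle$ is an irreducible representation of $SO(3)$, so Schur's lemma forces $U_{\chi}(s)$ to act on it as a scalar $\alpha_{l}(s)$. In particular $\{U_{\chi}(t/N)\}^{N}Y_{l,m}=\alpha_{l}(t/N)^{N}Y_{l,m}$, while $e^{it\triangle/2}Y_{l,m}=e^{-itl(l+1)/2}Y_{l,m}$ has modulus one.

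The second step applies both operators to the orthonormal test vector $Y_{l_{N},0}$. The difference acts as multiplication by a scalar, and the reverse triangle inequality combined with $|\alpha_{l_{N}}(t/N)|<1/2$ gives
\[
\bigl\|(e^{it\triangle/2}-\{U_{\chi}(t/N)\}^{N})Y_{l_{N},0}\bigr\|_{L^{2}}=\bigl|e^{-itl_{N}(l_{N}+1)/2}-\alpha_{l_{N}}(t/N)^{N}\bigr|\ge 1-|\alpha_{l_{N}}(t/N)|^{N}>1-(1/2)^{N}\ge 1/2.
\]
Since $\|Y_{l_{N},0}\|_{L^{2}}=1$, this yields the desired operator-norm lower bound $\|e^{it\triangle/2}-\{U_{\chi}(t/N)\}^{N}\|>1/2$ for every $N$, and in particular prevents any uniform convergence.

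The one conceptual point that could cause friction is the Schur-lemma diagonalization in Step 1; an alternative route that avoids representation theory is the Funk--Hecke formula, which exhibits $\alpha_{l}(s)$ directly as an explicit integral of the radial kernel against a Legendre polynomial. Either way, once the diagonal action is in hand, the core of the argument is a one-line scalar triangle inequality carried out on a single eigenfunction.
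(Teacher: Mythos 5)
Your proof is correct and follows essentially the same route as the paper: the paper's ``immediately'' after the estimate $\Vert \{U_{\chi}(t/N)\}^{N} Y_{l_{N},0}\Vert_{L^{2}}<(1/2)^{N}$ is precisely your reverse-triangle-inequality step on the single eigenfunction $Y_{l_{N},0}$, using that $e^{it\triangle/2}$ acts there with a modulus-one scalar. Your explicit justification via Schur's lemma (or Funk--Hecke) that the rotation-invariant kernel acts as a scalar $\alpha_{l}$ on each eigenspace only makes precise what the paper asserts when it says $U_{\chi}$ and $-\triangle$ commute.
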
 
\par 
\begin{remark}
We can't apply the same method for $\mathbf{R}^n$ or non-trapping potentials.  
In this case, high energy eigenfunctions spread on whole spaces 
and Feynman path integral converges uniformly. 
The above theorem indicates that the 
weak concentration of high energy eigenfunctions  
gives Feynman path's strong convergence. 
\end{remark}
\begin{remark}
Since $S^2$ is compact, we need not to use 
Cotlar-Stein lemma (See e.g. \cite[p.238]{Fu 3}). 
\end{remark}
\setcounter{section}{4}
\setcounter{theorem}{1}
\section{Some remarks}
\begin{remark}
Our estimates hold in Sobolev spaces (See \S 2), that is  
$$
s-\lim_{N\rightarrow \infty} \{U_{\chi}(t/N)\}^N 
\rho({N^{1/3-\varepsilon}}) f(x) 
=e^{\frac{it\triangle}{2}} f(x) \quad \mbox{in}\ H^k(S^2).
$$ 
Remarking that $|\tilde E(t) f(x)| \leqq 
\int_{0}^t | E(s)\{e^{\frac{-is \triangle}{2}}f(x)\}| ds 
\leqq  \frac{\tilde C t^2}{2} \Vert (-\triangle+1)^3 f(x) \Vert_{L^2}
$, we also have uniformly convergence :
$$
\lim_{N\rightarrow \infty}\sup\limits_{x \in S^2}| [\{U_{\chi}(t/N)\}^N 
-e^{\frac{it\triangle}{2}}] \rho({N^{1/3-\varepsilon}}) f(x)|=0 
\quad \mbox{for}\ f(x)\in L^2(S^2).
$$ 
Carleson's theorem has been extended to zonal functions(See \cite[Theorem1.3]{Me}). 
If $f\in L^2(S^2)$ is zonal about a point $\tau \in S^2$, then its harmonic 
expansion is convergent everywhere : 
$$
\lim_{N\rightarrow \infty} \{U_{\chi}(t/N)\}^N 
\rho({N^{1/3-\varepsilon}}) f(x) 
=e^{\frac{it\triangle}{2}} f(x) 
\quad \mbox{a.e.}\ 
$$
\end{remark}
\begin{remark}
In physics literature(See e.g. \cite[p.2360]{GS}), Feynman path integral on $S^2$ is formally represented as  
\begin{align*}
\int_{\Omega} e^{\frac{i}{h} S(\gamma)} {\mathcal D}{\gamma}
&=K(t,x,y) \\
&=\sum_{l=0}^{\infty} \sum_{m=-l}^{l} e^{-ih^2 t l(l+1)/2} 
Y_{l,m}(\theta_1, \phi_1) Y_{l,m}(\theta_2, \phi_2) \\  
&=\sum_{l=0}^{\infty} \left( \frac{2l+1}{4\pi}\right) e^{-ih^2l(l+1)/2}  C_l^{1/2}(\cos d(x, y))                        
\end{align*}
where $\Omega$ is the path space on $S^2$, $C_l^{1/2}(x)$ are the Gegenbauer polynomials defined 
in terms of their generating function 
$$
\frac{1}{(1-2xt+t^2)^{1/2}}=\sum_{l=0}^{\infty} C_l^{1/2}(x)\; t^{l}.
$$
Last terms are interpreted in the sense of distributions.   
Take $h=1$ and $t=\frac{8\pi m}{k}\in {\mathbf Q}$ ($k$, $m$ are relatively prime), by M.Taylor's 
calculations (See \cite[p.147]{Ta 2}.) 
$$
e^{\frac{it \triangle}{2}}=e^{it/4 }\sum\limits_{j=0}^{2k-1} \Gamma (m, k, j)\cos \frac{2 \pi j}{k} A, 
$$
where $\Gamma (m, k, j)=\frac{1}{2 \pi} \sum\limits_{l=0}^{2k-1} e^{\pi i(l^2 m+lj)/k}$ are Gaussian sums and 
$\cos \frac{2 \pi j}{k} A$ are Fourier integral operators assocated to 
$A=\sqrt{-\triangle+\frac{1}{4}}$. We conclude 
\begin{align*}
s-\lim_{N\rightarrow \infty} \{U_{\chi}(8\pi m/kN)\}^N \rho(N) f(x)
&=\int_{S^2} \sum_{l=0}^{\infty} 
\left( \frac{2l+1}{4\pi}\right) e^{-i 8\pi m l(l+1)/2k}  C_l^{1/2}(\cos d(x, y))\ f(y) dy \\
&=\left\{e^{2  \pi i m/k} \sum\limits_{j=0}^{2k-1} \Gamma (m, k, j)\cos \frac{2 \pi j}{k} A\right\} f(x) 
\quad  \mbox{for}\ f(x) \in C^{\infty}(S^2). 
\end{align*}
The first term contains the imformations about geometry, the second term is a special function and 
the third term is denoted by Gaussian sums and F.I.O. How to establish the relationships explicit 
between these equations ?
\end{remark}
\begin{remark}
We employed the shortest paths on $S^2$. 
$U_{\chi}(t)$ is defined by the action integrals, 
Van Vleck determinants and Dewitt curvature. 
Van Vleck determinants diverge at antipodal points, 
thus we ignore the long paths. 
\par
On $S^1$, however, we can take infinite many long paths 
for Fresnel integrable functions. 
Then all eigenfunctions are Fresnel integrable. 
On $S^2$, can one construct the analogy ?
\end{remark}
\setcounter{section}{5}
\setcounter{theorem}{1}
\section{Conclusion}
\par
Simple WKB like formulas of Feynman integrations are discussed. 
Low energy approximations assure the unique classical paths. 
The quantum evolution is given by means of Van Vleck determinants 
and Dewitt curvature. That is 
$\{U_{\chi}(t/N)\}^N\ \mbox{proj}_N$ converges 
the Schr\"odinger operator in strong topology.  
Our strategy is available for general compact Riemannian manifolds,  
or super quadratic potentials on $\mathbf{R}$ {\cite{Mi}}.  
\vspace{3mm}\par 
\thanks{The author would like to thank Professor H.Nawa and 
Professor T.Suzuki for giving valuable comments. 
Professor A.Inoue encouraged me to try my studies again. 
I really appreciate his kindness. }
\vspace{5mm}\par 
\setcounter{section}{5}
\setcounter{theorem}{1}
\appendix{\Large\bf Appendix}
\vspace{3mm}
\par In this appendix we present a direct calculation of Van Vleck 
determinant on $S^2$. 
\begin{proof}[Proof of Lemma 2.1]  
\par
Let $(\sin\theta\cos\phi,\ \sin\theta\sin\phi,\ \cos \theta)$ 
be a spherical coordinate system. 
\par 
For $
x=(\sin\theta_1\cos\phi_1,\ \sin\theta_1\sin\phi_1,\ \cos \theta_1) 
$ and 
$
y=(\sin\theta_2\cos\phi_2,\ \sin\theta_2\sin\phi_2,\ \cos \theta_2) 
$, 
\begin{align*}
           d&=\arccos (x \cdot y) \\
            &= \arccos(\sin\theta_1\cos\phi_1 \sin\theta_2\cos\phi_2 
               +\sin\theta_1\sin\phi_1 \sin\theta_2\sin\phi_2
               +\cos \theta_1 \cos \theta_2) \\
            &= \arccos\{\sin\theta_1 \sin\theta_2 \cos(\phi_1 -\phi_2)
                +\cos \theta_1 \cos \theta_2\}. \\
\end{align*}
\par
From definition 1.1 we have 
\begin{align*}          
 &S(t, x, y)=\frac{d^2}{2t}=\frac{1}{2t}\arccos^2 
                       \{\sin\theta_1 \sin\theta_2 \cos(\phi_1 -\phi_2)
                        +\cos \theta_1 \cos \theta_2\},  \\  
 &V(t, x, y)=g^{-1/2}(x)g^{-1/2}(y) 
\begin{vmatrix}
\frac{\partial^2 S}{\partial \theta_1 \partial \theta_2} & \frac{\partial^2 S}{\partial \theta_1 \partial \phi_2} \\
\frac{\partial^2 S}{\partial \phi_1 \partial \theta_2} & \frac{\partial^2 S}{\partial \phi_1 \partial \phi_2} 
\end{vmatrix} 
=\frac{1}{\sin \theta_1 \sin \theta_2} 
\begin{vmatrix}
\frac{\partial^2 S}{\partial \theta_1 \partial \theta_2} & \frac{\partial^2 S}{\partial \theta_1 \partial \phi_2} \\
\frac{\partial^2 S}{\partial \phi_1 \partial \theta_2} & \frac{\partial^2 S}{\partial \phi_1 \partial \phi_2} 
\end{vmatrix}. 
\end{align*}
The caluculation of determinant is  
a little bit long. We give an intermediate expression 
checked by Mathematica 7.0. [Wolfram Research, Inc.] 
\begin{align*}
V(t, x, y)=\frac{1}{4t^2\sin \theta_1 \sin \theta_2}
&[16\arccos\{\cos\theta_1\cos\theta_2 
+ \cos(\phi_2 - \phi_1)\sin\theta_1\sin\theta_2\}\sin\theta_1\sin\theta_2]/ 
\\
&[10-2\cos 2(\phi_2 - \phi_1) + 
\cos 2(\phi_2 - \theta_1 - \phi_1) + \cos 2(\phi_2 + \theta_1 - \phi_1) 
\\ 
& -2\cos 2\theta_1\{1 + \{3 + \cos 2(\phi_2 - \phi_1)\} \cos 2\theta_2 \} 
\\
& - 4\cos 2\theta_2 \sin^2(\phi_2 - \phi_1) 
 -8\cos(\phi_2 - \phi_1)\sin 2\theta_1 \sin 2\theta_2]^{1/2} \\
=\frac{1}{4t^2}
&[16\arccos\{\cos\theta_1\cos\theta_2 
+ \sin\theta_1\sin\theta_2 \cos(\phi_2 - \phi_1)\}]/ 
\\
&[4\sin \arccos(\sin\theta_1\cos\phi_1 \sin\theta_2\cos\phi_2 
               +\sin\theta_1\sin\phi_1 \sin\theta_2\sin\phi_2
               +\cos \theta_1 \cos \theta_2) ] \\
=\frac{1}{t^2}
&[\arccos\{\cos\theta_1\cos\theta_2 
+ \sin\theta_1\sin\theta_2 \cos(\phi_2 - \phi_1)\}]/ 
\\
&[\sin \arccos\{\cos\theta_1\cos\theta_2 
+ \sin\theta_1\sin\theta_2 \cos(\phi_2 - \phi_1)\} ] \\
&\hspace{-8mm} =\frac{d(x, y)}{t^2 \sin d(x, y)}.
 \end{align*}
\end{proof}

\vspace{10mm}
\par 
Y.Miyanishi: Division of Mathematical Science, 
Department of System Innovation, 
\par\quad\quad\quad\quad\quad\quad
Graduate School 
of Engineering Science, Osaka University, 
\par\quad\quad\quad\quad\quad\quad
Machikaneyamacho 1-3, 
Toyonakashi 560-8531, Japan; 
\vspace{2mm}\par
e-mail: miyanishi@sigmath.es.osaka-u.ac.jp  
\end{document}